\title{One-Sided Local Crossing Minimization}
\author{Panos Giannopoulos}
{City St. George's, University of London, United Kingdom}
{panos.giannopoulos@city.ac.uk}
{https://orcid.org/0000-0002-6261-1961}{}
\author{Miriam Goetze}
{Karlsruhe Institute of Technology (KIT), Germany}
{miriam.goetze@kit.edu}
{https://orcid.org/0000-0001-8746-522X}{funded by the Deutsche Forschungsgemeinschaft (DFG, German Research Foundation) -- 520723789}
\author{Grzegorz Gutowski}
  {Institute of Theoretical Computer Science, Faculty of Mathematics and Computer Science, Jagiellonian University, Krak{\'o}w, Poland}
  {grzegorz.gutowski@uj.edu.pl}
  {https://orcid.org/0000-0003-3313-1237}
  {Partially supported by grant no.~2023/51/B/ST6/02833 from National Science Centre, Poland.}
\author{Maarten L\"offler}{Utrecht University, the Netherlands}{M.Loffler@uu.nl}{https://orcid.org/0009-0001-9403-8856}{}
\author{Martin N\"ollenburg}
{Algorithms and Complexity Group, TU Wien, Vienna, Austria}
{noellenburg@ac.tuwien.ac.at}
{https://orcid.org/0000-0003-0454-3937}{}
\author{Yuto Okada}
{Nagoya University, Japan, JSPS Research Fellow \and \url{https://yutookada.com/en/}}
{research@yutookada.com}
{https://orcid.org/0000-0002-1156-0383}
{Supported by JST SPRING Grant Number JPMJSP2125 and JSPS KAKENHI Grant Number JP26KJ1299.}
\author{Jonathan Rollin}
{FernUniversit\"at in Hagen, Germany}
{jonathan.rollin@fernuni-hagen.de}
{https://orcid.org/0000-0002-6769-7098}{}
\author{Birgit Vogtenhuber}
{Technische Universit\"at Graz, Austria}
{birgit.vogtenhuber@tugraz.at}
{https://orcid.org/0000-0002-7166-4467}{}
\author{Alexander~Wolff}{Universit\"at W\"urzburg, Germany \and \url{https://www.informatik.uni-wuerzburg.de/en/algo/team/wolff-alexander/}}{}
{https://orcid.org/0000-0001-5872-718X}{}
\authorrunning{P.~Giannopoulos et al.}
\keywords{Graph drawing, %
  one-sided (local) crossing minimization,
  $k$-planarity}
\newcommand{\defproblem}[4]{
  \begin{tcolorbox}%
    \nolinenumbers
    \hspace{0ex}\hspace*{-2.8ex}
    \begin{minipage}{0.99\textwidth}
      \vspace{0ex}\vspace*{-1ex}
      \begin{tabular}{@{}l@{~~}p{0.9\textwidth}@{}}
        {\sf\bfseries\normalsize\color{gray} Problem:} & \normalsize#1\\[.1ex]
        {\sf\bfseries\normalsize\color{gray} Input:} & \normalsize#2\\[.1ex]
        {\sf\bfseries\normalsize\color{gray} #4:} & \normalsize#3\\[-1ex]
      \end{tabular}
    \end{minipage}
  \end{tcolorbox}
}
\newcommand{\defdecproblem}[3]{\defproblem{#1}{#2}{#3}{Question}}
\let\leq\leqslant
\let\geq\geqslant
\let\le\leqslant
\let\ge\geqslant
\let\rho\varrho
\newcommand{\set}[1]{{\left\{#1\right\}}}
\newcommand{\floor}[1]{{\left\lfloor #1 \right\rfloor}}
\newcommand{\ceil}[1]{{\left\lceil #1 \right\rceil}}
\DeclareMathOperator{\precount}{pre}
\newcommand{\XNLP}{\ensuremath{\mathsf{XNLP}}\xspace}
\newcommand{\OSLCM}{\textsc{One-Sided Local Crossing Minimization}\xspace}
\newcommand{\OSkP}{\textsc{One-Sided $k$-Planarity}\xspace}
\newcommand{\ThreeP}{\textsc{3-Partition}\xspace}
\newcommand{\kWayPartition}{\textsc{$k$-way Partition}\xspace}
\newcommand{\Algorithm}{\ensuremath{\mathsf{A}}\xspace}
\DeclareMathOperator{\median}{med}
\DeclareMathOperator{\crosss}{cross}
\newcommand{\crossafter}{\crosss_{\textsf{aft}}}
\newcommand{\crossbefore}{\crosss_{\textsf{bef}}}
\DeclareMathOperator{\Crosss}{Cross}
\newcommand{\Crossafter}{\Crosss_{\textsf{aft}}}
\newcommand{\Crossbefore}{\Crosss_{\textsf{bef}}}
\newcommand{\kappabefore}{\kappa_{\textsf{bef}}}
\newcommand{\kappaafter}{\kappa_{\textsf{aft}}}
\newcommand{\crcnt}[3]{\ensuremath{\otimes^{#1}_{#2}\left({#3}\right)}}
\newcommand{\optord}{<_\star\xspace}
\definecolor{dark blue}{rgb}{0.121,0.47,0.705}
\let\emph\relax\DeclareTextFontCommand{\emph}{\color{dark blue}\em}
\newif\ifinappendix%
\let\oldappendix\appendix%
\renewcommand{\appendix}{%
  \oldappendix%
  \inappendixtrue%
}
\newcommand{\restateref}[1]{\ifinappendix{\hyperref[#1]{$\star$}}\else{\hyperref[#1*]{$\star$}}\fi}
\begin{document}

\maketitle

\begin{abstract}
    Drawing graphs with the minimum number of crossings is a classical
    problem that has been studied extensively.  Many restricted versions
    of the problem have been considered.  For example, bipartite graphs
    can be drawn such that the two sets in the bipartition of the vertex
    set are mapped to two parallel lines, and the edges are drawn as
    straight-line segments.  In this setting, the number of crossings
    depends only on the ordering of the vertices on the two lines.  Two
    natural variants of the problem have been studied.  In the one-sided
    case, the order of the vertices on one of the two lines is given and
    fixed; in the two-sided case, no order is given.  Both cases are
    important yet \NP-hard subproblems in the so-called Sugiyama framework for drawing
    layered graphs with few crossings.
    For the one-sided case, Eades and Wormald [Algorithmica 1994]
    introduced a {\em median heuristic} and showed that it has an
    approximation ratio of~$3$.

    In recent years, researchers have focused on a local version of
    crossing minimization, where the aim is to minimize the maximum number of crossings per edge instead of the total number of crossings.
    Kobayashi, Okada, and Wolff [SoCG 2025]
    investigated the complexity of local crossing minimization
    parameterized by the natural parameter. They conjectured that
    one-sided local crossing minimization is \NP-hard. In this work, we
    confirm their conjecture by showing that the problem is \NP-hard even for forests of high-degree stars.
    In fact, more strongly, the reduction yields a tight lower bound, which excludes the existence of subexponential-time algorithms assuming the Exponential-Time Hypothesis.
    In contrast, we present a quadratic-time algorithm for the special case of forests of stars of maximum degree 2.
    Finally, we provide a median heuristic with a carefully designed tie-breaking
    scheme and prove that it has an approximation ratio of~$3$ in the local setting.
\end{abstract}

\section{Introduction}

Abstract graphs and networks are often drawn as node-link diagrams: nodes are
mapped to points (or small disks) and links (edges) are mapped to
curves that connect the corresponding nodes.  In principle, node-link
diagrams are an intuitive way to visualize small and medium-size
networks, but how easily and quickly users can execute some task on a
given drawing depends on several aesthetic criteria. According to
user studies~\cite{Purchase1997WhichAesthetic,KornerAlbert2002Speed},
the number of crossings (but also the crossing
angle~\cite{HuangEadesHong2014Larger}) plays an important role.  These
findings in cognitive psychology motivate research on crossing
minimization in graph drawing.  Unfortunately, the crossing
minimization turned out to be
\NP-hard~\cite{GareyJohnsonStockmeyer1976}.
Therefore, researchers
have also studied restricted settings, where nodes must be placed on a
circle or, in the case of bipartite graphs, on either of two parallel
lines called \emph{layers}.  However, even in these restricted
settings, where (assuming straight-line edges) the number of crossings
depends only on the order of the vertices, crossing minimization
turned out to be \NP-hard~\cite{garey-johnson-cr-NP-SIJADM83}.  Eades
and Wormald \cite{ew-ecdbg-Algorithmica94} showed that 2-layer
crossing minimization remains \NP-hard in the one-sided case where the
ordering of the vertices on one of the two layers is given.  For this
case, they proposed a so-called \emph{median heuristic} and showed
that it is a 3-approximation algorithm.  This median heuristic
orders the vertices on the flexible layer by the order of their median
neighbors on the fixed layer, assuming that vertices on the flexible
layer have degree at least~1.  (For the proof of the approximation
factor, they insisted that, in the case of ties, vertices of odd
degree come before vertices of even degree.)
There are other approaches including a $1.4664$-approximation algorithm by Nagamochi~\cite{Nagamochi2005} and the barycenter heuristic, which does not provide a constant-factor approximation guarantee, but empirical studies have demonstrated that it is highly effective in practice~\cite{jm-2lscm-JGAA97}.
Eades and Wormald
pointed out that algorithms for 2-layer crossing minimization are
important subroutines in the so-called Sugiyama
framework~\cite{sugiyama-etal-hierarchical-TSMC81} for layered
graph drawing of hierarchical and directed graphs: Once the nodes of the input graph are partitioned into a stack of layers,
an algorithm for the two-sided case is first applied to
layers~1 and~2; then layer~2 is fixed and an algorithm for the
one-sided case is applied to layers~2 and~3, etc. This process is repeated forward and backward until the total number of crossings does not decrease any more (or a certain iteration threshold is reached).
Moreover, one-sided crossing minimization was the topic of the Parameterized Algorithms and Computational Experiments (PACE) Challenge 2024~\cite{DBLP:conf/iwpec/KindermannKT24}.

In recent years, graph drawers have become interested in classes of
so-called \emph{beyond-planar} graphs, that is, graphs that are not
far from being planar.  A prominent example for such a graph class are
$k$-planar graphs, that is, graphs that can be drawn with at most $k$
crossings per edge.  Unfortunately, it is \NP-hard to recognize even
1-planar graphs~\cite{Grigoriev-Bodlaender-Algorithmica07},
but exponential-time algorithms that work well for small graphs
have been suggested \cite{Binucci2023,Pupyrev2025,FinkEtAl2025}.
Therefore, also in this local setting, researchers have turned to the
above restricted variants.
We say that a graph is \emph{outer $k$-planar} if it admits a drawing
where the vertices are mapped to distinct points on a circle and the
edges are mapped to straight-line segments that connect the (images of the)
corresponding vertices such that every edge is crossed at most $k$ times.
It turned out that
outer 1-planar graphs can be recognized in linear
time~\cite{auer-etal-o1p-Algorithmica16,hong-etal-o1p-Algorithmica15}.
Later, it was shown that outer $k$-planar graphs can be recognized in quasi-polynomial time~\cite{BeyondOuterplanarity} for every fixed $k$. Recently, Kobayashi, Okada, and Wolff~\cite{kow-r2lok-SoCG25} gave an \XP-algorithm for recognizing outer $k$-planar graphs w.r.t.\ the natural parameter~$k$.
Hence, for every fixed~$k$, there exists a
polynomial-time algorithm for recognizing outer $k$-planar graphs.
On the other hand, Kobayashi et al.\ showed that the problem is
\XNLP-hard (and hence
\W$[t]$-hard for every~$t$, which makes it unlikely that the problem admits an \FPT-algorithm).  According to Schaefer’s
survey~\cite{schaefer-survey-EJC24} on crossing numbers,
Kainen~\cite{kainen-okp-CGTC89} introduced the \emph{local outerplanar
    crossing number}, which is the smallest~$k$ such that the given
graph is outer $k$-planar.

The local variant of the restriction to two layers yields the class of
\emph{2-layer $k$-planar} graphs.  Angelini, Da Lozzo, Förster, and
Schneck~\cite{2layerkplanar} analyzed the edge density of these graphs and
characterized the 2-layer $k$-planar graphs with maximum edge
density for $k \in \{2,4\}$.  Kobayashi et al.~\cite{kow-r2lok-SoCG25}
showed that the two-sided variant of the problem parameterized by $k$ is \XNLP-hard, but
admits an \XP-algorithm.  For the one-sided variant, they
presented an \FPT-algorithm w.r.t.~$k$ and conjectured that the problem is
\NP-hard.

\subparagraph{Our contribution.}

We first settle in the affirmative the conjecture of Kobayashi et al.~\cite{kow-r2lok-SoCG25}
regarding the \NP-hardness of one-sided local crossing minimization;
see \cref{sec:hard}.
Our proof is by reduction from \kWayPartition, with the constructed graph being a forest of stars, where some of the vertices can have high degree. Coupled with the recent result by Bringmann, D\"{u}rr, and W\k{e}grzycki~\cite{BringmannDW2026}, the reduction yields a tight lower bound, which excludes the existence of subexponential-time algorithms assuming the Exponential-Time Hypothesis (ETH).

In contrast, we show that the problem is solvable in $O(n^2\log n)$ time for the case where the input is a forest of $n$ stars with maximum degree~2 and all centers lie on the flexible layer;
see \cref{sec:2stars}. To achieve this, we first prove that positive instances admit 2-layer $k$-planar drawings with specific restrictions. We then use this property to derive a constructive algorithm that iteratively produces such a 2-layer $k$-planar drawing in a greedy way (or determines that the input is a no-instance).

Finally, we show that the median heuristic of Eades and Wormald can be adapted to a 3-approximation in the local setting; see \cref{sec:approx}.
To this end, we devise a slightly more complicated tie-breaking rule, albeit with an intricate analysis,
which makes sure that we can bound the number of crossings {\em per edge} instead of only the total number.
We also show that our analysis of the median heuristic is tight; see \cref{sec:tight}. We prove that no version of the median
heuristic admits a $\delta$-approximation with $\delta<3$.

\subparagraph{Further related work.}
For circular and 2-layer crossing minimization, heuristics have been
proposed and evaluated
experimentally~\cite{BaurB04:WG:circular-crossing-minimization,jm-2lscm-JGAA97}.
Circular crossing minimization
\cite{bannister-eppstein-JGAA18,KobayashiOT17:IPEC:one-page-cm}, the
one-sided case~\cite{dujmovic-etal-FPT-OSCM-JDA08} and the two-sided
case~\cite{kobayashi-tamaki-IPL16} of 2-layer crossing minimization
admit \FPT-algorithms; the one-sided case even subexponential
ones~\cite{kt-fssfpt-Algorithmica14}.
A simple linear-time algorithm for the one-sided case is known when the input is a forest of stars with maximum degree~2 and all centers on the flexible layer, while NP-hardness is known for forests of stars with maximum degree~4~\cite{muv-4stars-GD02} and trees of depth~2~\cite{d-ncocmt-25}.
Recently, fast quantum algorithms have been proposed for one-sided crossing
minimization~\cite{cdd-qa1scm-TCS25}.  Circular crossing minimization
is a special case of a book embedding problem, and as such has many
other generalizations~\cite{ackssuw-ecog-SWAT24}.

\section{Preliminaries}

A \emph{2-layer network} $G=(X,Y,E)$ is a bipartite graph whose vertex set
is $X \cup Y$ (with $X \cap Y = \emptyset$) and whose edge set is~$E$.
We use the convention that we write edges as ordered pairs of vertices
such that the first vertex is always in~$X$, which is the upper level in our figures.
A \emph{2-layer drawing} of $G$ is a pair $(<_X, <_Y)$, where
$<_X$ and~$<_Y$ are linear orders of~$X$ and~$Y$, respectively.
In such a drawing, two edges $(x_1, y_1)$ and $(x_2, y_2)$
cross if and only if $(x_1 <_X x_2) \land (y_2 <_Y y_1)$
or $(x_2 <_X x_1) \land (y_1 <_Y y_2)$ holds.
For an integer $k \geq 0$, a 2-layer drawing is \emph{$k$-planar} if every edge crosses at most $k$ edges.
The \emph{local crossing number} of a 2-layer drawing $(<_X, <_Y)$ of $G$ is the maximum number of crossings of any edge.%
For a 2-layer network $G$ and an order~$<_X$, the
\emph{one-sided local crossing number} of $(G,<_X)$ is the minimum,
taken over all orders~$<_Y$ of~$Y$, of the local crossing number of
the 2-layer drawing $(<_X,<_Y)$.  We consider the following decision problem.

\defdecproblem
{\OSkP}
{A 2-layer network $G = (X, Y, E)$, a linear order $<_X$ of $X$, an integer $k \geq 0$.}
{Does $Y$ admit a linear order $<_Y$ such that $(<_X, <_Y)$ is a 2-layer $k$-planar drawing of $G$?}

The corresponding optimization problem \OSLCM
asks for the smallest $k$ such that
$(G,<_X)$ is a yes-instance of \OSkP.%

Given $(G,<_X)$ (with no isolated vertices), the median heuristic
chooses, for each vertex $y \in Y$ its \emph{median}, denoted by $\median(y)$, to be either its $\floor{\deg(y)/2}$-th or $\ceil{\deg(y)/2}$-th neighbor in the order $<_X$.%
The choice depends on the algorithm, but we call any algorithm that uses one of these choices a median heuristic.
Observe that the median is only defined for vertices in $Y$ with degree at least one.
We can safely assume that every vertex in~$Y$ has degree at least one, as an isolated vertex does not introduce any crossings, regardless of the position in the order.
The choice of medians defines the following partial order $\sqsubseteq_Y$ on $Y$:
\[
    y_1 \sqsubseteq_Y y_2 \;\iff\; \median(y_1) = \median(y_2) \;\lor\; \median(y_1) <_X \median(y_2)
\]
The median heuristic returns an order $<_Y$ that is some linear extension of~$\sqsubseteq_Y$.

\section{NP-Hardness and ETH-Based Lower Bound}
\label{sec:hard}

In this section, we first show that \OSkP is \NP-complete.  For the
\NP-hardness part, we reduce from the following problem.

\defdecproblem{\kWayPartition}
{A set of $n$ integers $S = \{s_1, \dots, s_n\} \subseteq \mathbb{N}$ and an integer $k \geq 2$.}
{Is there a partition of $S$ into $k$ subsets $S_1, \dots, S_{k}$ such that each $S_i$ has a sum of exactly $T = \frac{1}{k}\sum_{s \in S} s$?}

It is easy to see that this problem is strongly \NP-hard, as it contains the special case of \ThreeP in which $S \subseteq \mathbb{N} \cap (T/4,T/2)$~\cite{GareyJ78}.
We, however, use this more general problem for our reduction, in order to further obtain an ETH-based lower bound as a byproduct.

\begin{restatable}[\restateref{thm:np-hard}]{theorem}{NPHard}
    \label{thm:np-hard}
    \OSkP is \NP-complete.
\end{restatable}

\begin{proof}[Proof sketch]
    The problem is clearly in \NP\  since testing if no edge has more than $k$ crossings in a given drawing can be done in polynomial time.
    Here, we only present a reduction from \kWayPartition and sketch its correctness.
    The full proof is deferred to the appendix.

    We assume that the integers in $S$ are polynomial in the input length.
    We also assume that $k \leq n$, since otherwise the answer is trivially no.

    \begin{figure}[h]
        \centering
        \includegraphics[page=1]{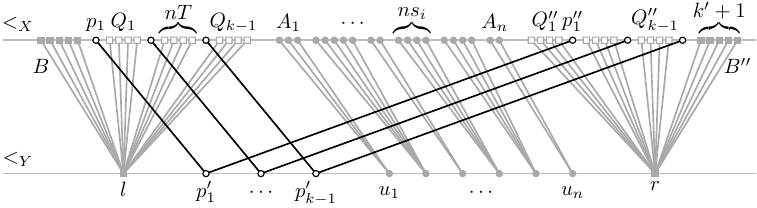}
        \caption{The instance $\langle G, <_X, k' \rangle$ that we
            construct from the \kWayPartition instance~$\langle S, k \rangle$.}
        \label{fig:construction}
    \end{figure}

    \subparagraph*{Construction.}

    Let $\langle S = \{s_1, \dots, s_n\}, k \rangle$ be an instance of \kWayPartition with $n = |S|$ and $T = \sum_{s \in S} s / k$.
    For this instance, we construct an instance $\langle G = (X, Y, E), <_X, k' \rangle$ of \OSkP, where $k' = knT + (k - 1)$; see \cref{fig:construction} for an illustration of the construction.
    Let us define the vertex sets~$X$ and~$Y$, and the edge set~$E$ of~$G$.
    The set~$X$ consists of the following
    subsets, together with vertices $p_1,\dots,p_{k-1}$ and
    $p_1'',\dots,p_{k-1}''$.
    \begin{itemize}
        \item Let $B = \{b_{1}, \dots, b_{k'+1}\}$ and $B'' = \{b''_{1}, \dots, b''_{k'+1}\}$.
        \item For $1 \le i \le k-1$, let $Q_i = \{q_{i, 1}, \dots, q_{i, nT}\}$ and $Q''_i = \{q''_{i, 1}, \dots, q''_{i, nT}\}$.
        \item For $1 \le i \le n$, let $A_i = \{a_{i, 1}, \dots, a_{i, n s_i}\}$.
    \end{itemize}
    In the linear order $<_X$, the above subsets and the additional vertices are ordered as follows, where $<_X$ is abbreviated to
    $<$ for better readability and $\binom{a}{<  b}$ denotes $a < b$.
    For each of the subsets listed above, the internal order of its elements is as introduced above.
    
    \begin{equation*}
        B < \begin{pmatrix}p_1 \\ < Q_1\end{pmatrix} < \dots < \begin{pmatrix}p_{k-1} \\ < Q_{k-1}\end{pmatrix} < A_1 < \dots < A_{n} < \begin{pmatrix}Q_1'' \\ < p_1'' \end{pmatrix} < \dots < \begin{pmatrix}Q_{k-1}'' \\ < p_{k-1}'' \end{pmatrix} < B''.
    \end{equation*}
    Let $Y$ be the set of vertices $\{l, r, p'_1, \dots, p'_{k-1}, u_1, \dots, u_{n}\}$.
    Finally, let $E$ be the set consisting of the following edges:
    \begin{itemize}
        \item $(x, l)$ for every $x \in B \cup Q_1 \cup \dots \cup Q_{k-1}$ and $(x, r)$ for every $x \in B'' \cup Q''_1 \cup \dots \cup Q''_{k-1}$;
        \item $(p_i, p'_i)$ and  $(p''_i, p'_i)$ for every $1 \leq i \leq k-1$;
        \item $(a, u_i)$ for every $a \in A_i$ for every $1 \leq i \leq n$.
    \end{itemize}
    This completes the construction and it can be clearly done in polynomial time.

    \subparagraph*{Sketch of correctness.}

    The following are crucial properties that we show in the appendix.
    \begin{itemize}
        \item In any 2-layer $k'$-planar drawing $(<_X, <_Y)$ of $G$, $l <_Y p_1' <_Y \dots <_Y p_{k-1}' <_Y r$ holds and the remaining vertices $u_i$ ($i=1, \dots, n$) appear between them.
        \item For a linear order $<_Y$ satisfying $l <_Y p_1' <_Y \dots <_Y p_{k-1}' <_Y r$, let $U_i = \{u \in Y \mid p_{i-1}' <_Y u <_Y p_{i}'\}$ with $p_{0}' = l$ and $p_{k}' = r$. Then, $(<_X, <_Y)$ is a 2-layer $k'$-planar drawing if and only if $\sum_{u \in U_i} \deg(u) = nT$ holds for every $1 \leq i \leq k$.
    \end{itemize}

    These properties allow us to obtain a solution of $\langle S, k \rangle$ from a solution of $\langle G, <_X, k' \rangle$, and vice versa:
    the set of vertices between $p_{i-1}'$ and $p_i'$ in $<_Y$ corresponds to the $i$-th subset of a solution (partition) of $\langle S, k \rangle$.
    Since the degrees are proportional to the integers in $S$, $\sum_{u \in U_i} \deg(u) = nT$ implies that the corresponding set has sum exactly $T$.
    Hence, from a solution of $\langle S, k \rangle$ we obtain a 2-layer $k'$-planar drawing of $G$ accordingly, and from that we can conversely extract a partition of $S$ with each subset having sum $T$.

    The first property and the if direction of the second property can be shown by a careful analysis.
    For the only-if direction, we use edges $(p_i, p_i')$ and $(p_i'', p_i')$ for each $i$.
    Counting the number of crossings, among the edges incident to the $u_j$'s, edges $(p_i, p_i')$ and $(p_i'', p_i')$ can respectively cross at most $inT + (k-i)$ edges and at most $(k-i)nT + i$ edges.
    As the degrees of the $u_j$'s are multiples of $n$ ($> k-1$), these bounds are effectively $inT$ and $(k-i)nT$, respectively.
    Using these bounds, we inductively derive $\sum_{u \in U_i} \deg(u) = nT$ for each $i$, starting with $i = 1$.
\end{proof}

Very recently, Bringmann, D\"{u}rr, and W\k{e}grzycki~\cite{BringmannDW2026} showed that \kWayPartition, alongside \textsc{Bin Packing}, does not admit a $2^{o(n)} T^{o(k)}$-time algorithm under the ETH.
We remark that this lower bound leads to the following result with the above reduction.

\begin{restatable}[\restateref{thm:ETHBasedLowerBound}]{theorem}{ETHBasedLowerBound}
    \label{thm:ETHBasedLowerBound}
    Assuming ETH, there is no
    $2^{o(|Y|)} \mathrm{poly}(|\mathcal{I}|)$-time algorithm for \OSkP,
    where $|\mathcal{I}|$ denotes the size of the instance.
\end{restatable}

This matches the complexity of a simple Held--Karp style algorithm and thereby settles the natural question of how much the na\"ive $O(|Y|!)$-time algorithm can be improved.

\begin{restatable}[\restateref{obs:HeldKarpDP}]{proposition}{HeldKarpDP}
    \label{obs:HeldKarpDP}
    \OSkP can be solved in $O^*(2^{|Y|})$ time.
\end{restatable}

\section{Algorithm for Forests of 2-Stars}
\label{sec:2stars}

In this section, a star of maximum degree 2 is called a \emph{2-star}.
Given a 2-layer network $G=(X, Y, E)$ that is a forest of 2-stars where all leaves are in $X$ and a linear order $<_X$,
we show how to decide \OSkP in quadratic time.
In other words, we are looking for an ordering of the star centers in~$Y$
such that the resulting 2-layer drawing is $k$-planar.
The algorithm turns out to be similar to the linear-time algorithm for global crossing minimization for such inputs~\cite{muv-4stars-GD02}.
However, correctness is much less obvious than in the global case, and we need a specific tie breaking, which is arbitrary in the global case.

\begin{theorem}\label{thm:2star-algo}
    \OSkP can be decided in quadratic time for forests of 2-stars with all centers in the flexible layer.
\end{theorem}

We start with several definitions and observations.
Let~$S_1$,~$S_2$ be two 2-stars with leaves $a_1$, $b_1$ and $a_2$, $b_2$ respectively,
labeled such that $a_1 <_X b_1$, $a_2 <_X b_2$, and $a_1 <_X a_2$.
The linear order~$<_X$ of the leaves
determines three possible types of relations between $S_1$ and~$S_2$: %
We say that $S_1$ and $S_2$ are
\emph{disjoint} if $a_1 <_X b_1 <_X a_2 <_X b_2$, \emph{interleaving} if $a_1 <_X a_2 <_X b_1  <_X b_2$, and \emph{nested} if $a_1 <_X a_2 <_X b_2 <_X b_1$, see \cref{fig:2star-types}.
In the latter case, we say that $S_2$ \emph{nests below} $S_1$ and $S_1$ is \emph{nested above} $S_2$.

\begin{figure}[htb]
    \centering
    \begin{subfigure}[t]{0.3\textwidth}
        \centering
        \includegraphics[page=2]{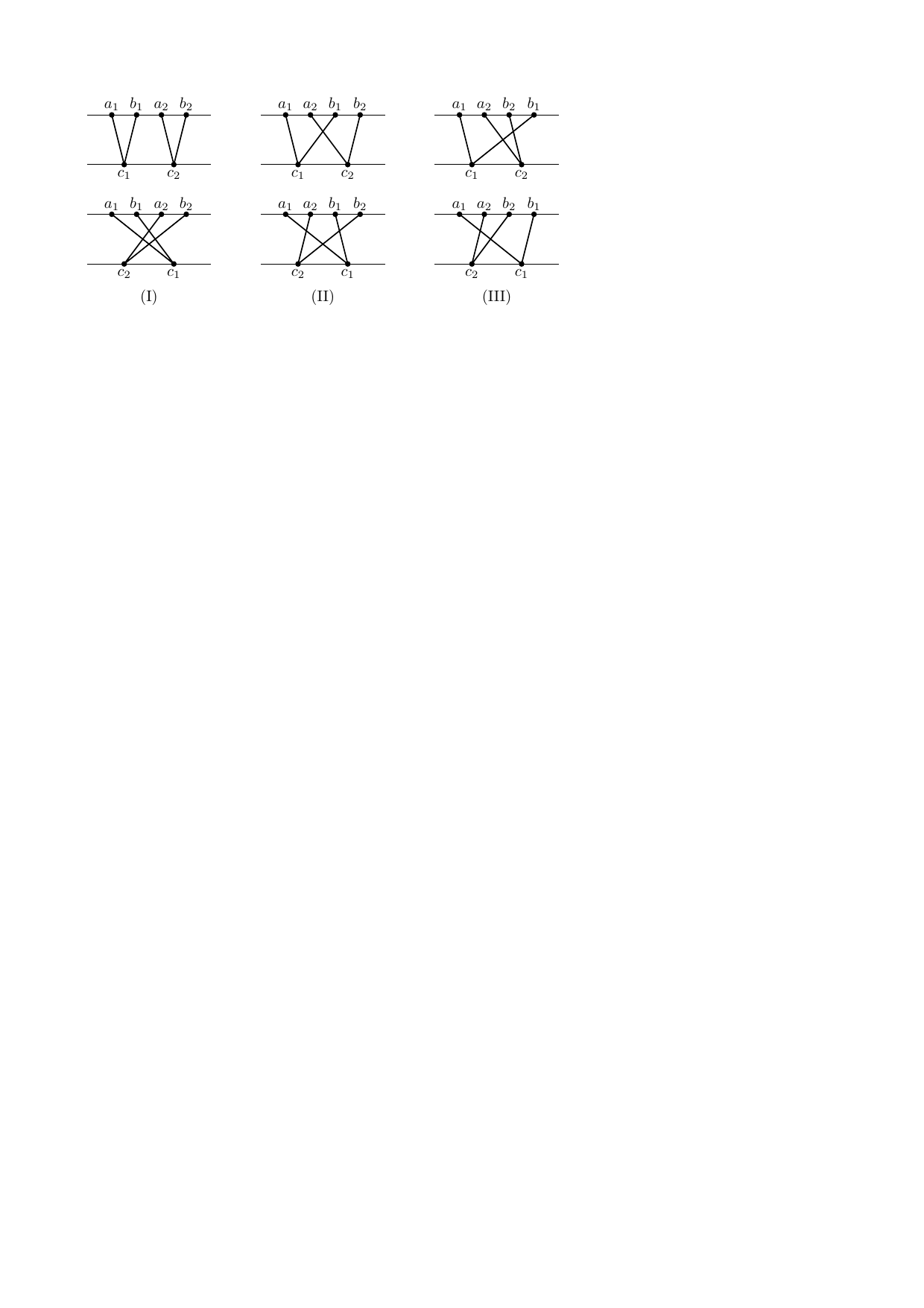}
        \subcaption{}
        \label{fig:2star-types-disjoint}
    \end{subfigure}\hfill
    \begin{subfigure}[t]{0.3\textwidth}
        \centering
        \includegraphics[page=3]{2star-types.pdf}
        \subcaption{}
        \label{fig:2star-types-interleaving}
    \end{subfigure}\hfill
    \begin{subfigure}[t]{0.3\textwidth}
        \centering
        \includegraphics[page=4]{2star-types.pdf}
        \subcaption{}
        \label{fig:2star-types-nested}
    \end{subfigure}\hfill
    \caption{Pairs of 2-stars $S_1, S_2$ that are disjoint (\subref{fig:2star-types-disjoint}), interleaving (\subref{fig:2star-types-interleaving}), and nested (\subref{fig:2star-types-nested}).}
    \label{fig:2star-types}
\end{figure}

Different crossing patterns emerge depending on the placement of the centers~$c_1$, $c_2$ on the other layer.
Figure~\ref{fig:2star-types} shows the three types, each with two possibilities.
For the interleaving type, the edges $c_1b_1$ and $c_2a_2$ receive one crossing, no matter how $c_1$ and $c_2$ are ordered.
Similarly, for the nested type, the edges $c_2a_2$ and $c_2b_2$ always receive one crossing.
This gives a lower bound on the minimum number of crossings for each edge: if some edge $uv$ plays the role of $c_1b_1$ or $c_2a_2$ in $m$ interleaving pairs (of stars) and the role of $c_2a_2$ or $c_2b_2$ in $m'$ nested pairs, then it is crossed at least $m+m'$ times for any linear ordering of the centers.
We call this number $m+m'$ the \emph{precount}~$\precount(uv)$ of the edge $uv$.
The following observation was also given by Mu{\~{n}}oz, Unger, and Vrt'o~\cite{muv-4stars-GD02}.

\begin{observation}
    \label{obs:precount}
    For any linear order $<_Y$ of the centers, each edge~$uv \in E$ is crossed at least $\precount(uv)$ times in the 2-layer drawing $(<_X,<_Y)$.
\end{observation}

For the disjoint and interleaving pairs it seems clearly better to place $c_1$ before $c_2$ as in the top row of \cref{fig:2star-types} (ignoring other 2-stars).
We call a pair of centers~$c_1$, $c_2$ (and its respective pair of 2-stars) \emph{untangled} if it achieves this pattern and \emph{tangled} otherwise (centers of nested pairs are always considered untangled). %
A linear order~$<_Y$ of the centers is untangled (tangled) if each pair of stars is untangled (tangled) and a 2-layer drawing $(<_X,<_Y)$ is untangled (tangled) if $<_Y$ is untangled (tangled).
The following lemma shows that any $k$-planar 2-layer drawing can be untangled.

\begin{restatable}[\restateref{lem:well-ordering}]{lemma}{untangleLemma}
    \label{lem:well-ordering}
    If $G$ admits a $k$-planar 2-layer drawing $(<_X, <_Y)$, then $G$ also admits %
    an untangled $k$-planar 2-layer drawing $(<_X, <'_Y)$.
\end{restatable}

\begin{proof}[Proof sketch]
    We iteratively modify the drawing $(<_X, <_Y)$ until it is untangled, while maintaining $k$-planarity.
    Consider first a disjoint pair $S_i$, $S_j$
    whose centers $c_i <_Y c_j$ are tangled and \emph{closest} in the order $<_Y$, that is, there is no pair of tangled, disjoint stars with fewer vertices between their respective centers in the order~$<_Y$.
    We show that for any such pair $c_i$, $c_j$, exchanging $c_i$ and $c_j$ in $<_Y$ untangles $c_i$, $c_j$ and neither creates any new tangled, disjoint pairs %
    nor increases the number of crossings on any edge of the drawing. Hence, iteratively exchanging the centers of a closest tangled, disjoint pair yields a $k$-planar 2-layer drawing $(<_X, <'_Y)$ where the centers of all disjoint 2-star pairs are untangled. %
    Then consider an interleaving pair $S_i$, $S_j$
    whose centers $c_i <'_Y c_j$ are tangled and are closest in the order $<'_Y$. We show that for any such pair in a $k$-planar 2-layer drawing without tangled, disjoint pairs, exchanging $c_i$ and $c_j$ in $<'_Y$ %
    neither creates a new tangled %
    pair nor increases the number of crossings on any edge of any 2-star $S \not\in \{S_i,S_j\}$. In this case, the exchange might in fact increase the number of crossings on some edge of $S_i$ or $S_j$. However, we prove that there is another edge that had this many crossings already before the exchange.
    So we can iteratively exchange a closest, tangled, interleaving pair, until the drawing is untangled, without ever violating $k$-planarity of the drawing.
\end{proof}

For any untangled $k$-planar drawing, the only crossings not included in the precounts %
stem from nested pairs and are on the edges $c_1a_1$ and $c_1b_1$ %
for some $S_1$ nested above some $S_2$. %
We can decide (for each nested pair) which of these edges receives two and which receives no additional crossing.
However, the %
decisions for different nested pairs influence each other.

We are now ready to describe the algorithm for constructing an untangled order $<_Y$ of the centers such that the 2-layer drawing $(<_X, <_Y)$ is $k$-planar, if such a drawing exists.
In the first step the precounts are computed by checking each pair of 2-stars.
If the precount of some edge is more than $k$, we immediately conclude by \cref{obs:precount} that no $k$-planar 2-layer drawing $(<_X, <_Y)$ exists.
Otherwise, we add the 2-stars one by one in a specific insertion order and let $<_{Y_i}$ denote the computed order of the centers of the first $i$ inserted 2-stars.
We maintain the invariant that the 2-layer drawing $(<_X, <_{Y_i})$ is untangled and $k$-planar at each step.
Here, $k$-planarity means that for each edge, the sum of its precount and its crossings from already placed nested pairs is at most $k$.
If in some step no such placement is possible, we conclude that there is no $k$-planar $2$-layer drawing $(<_X, <_Y)$.

The insertion order is defined as follows.
For a $2$-star~$S$ with leaves $a$, $b$ and $a <_X b$, we call $a$ the \emph{left leaf} of $S$.%
We process the 2-stars from right to left, ordered by their left leaves.
So let $S_1,\ldots,S_n$ be a labeling of the 2-stars such that the corresponding order of their left leaves is $a_n <_X \cdots <_X a_1$.
The corresponding labeling $c_1, \ldots, c_n$ of the centers is called \emph{lexicographic} and we process the centers accordingly.
The order $<_Y$ is computed as follows:

\begin{enumerate}
    \item The first center $c_1$ can be placed without any restrictions to obtain $<_{Y_1}$.

    \item For each $i$ from $2$ to $n$, place the center $c_i$ at the \emph{rightmost position} among the already placed centers $c_{i-1},\ldots,c_1$ with order $<_{Y_{i-1}}$, so that for the resulting order $<_{Y_i}$ the 2-layer drawing \mbox{$(<_X,<_{Y_i})$} is untangled {\em and} $k$-planar (where the latter respects all crossings of already placed $2$-stars and the precount).

          If no such placement is possible, then report that there is no linear order $<_Y$ such that \mbox{$(<_X,<_Y)$} is
          a 2-layer $k$-planar drawing.
\end{enumerate}

In order to prove \cref{thm:2star-algo} it remains to show correctness of the algorithm and its running time.
The drawing is untangled and $k$-planar at each step by construction. However, we need to prove that the algorithm computes an untangled $k$-planar drawing (including precounts), if there is one.

\begin{restatable}[\restateref{lem:2starAlgoCorrectness}]{lemma}{starAlgoCorrectness}
    \label{lem:2starAlgoCorrectness}
    Let~$G=(X,Y,E)$ be a forest of 2-stars with all leaves in $X$, let $<_X$ be a linear order of the leaves, and let~$c_1, \dots, c_n$ be the centers labeled lexicographically.
    If $\mathcal{I}$ admits a 2-layer $k$-planar drawing $(<_X,<_Y)$, then, for each $i\leq n$, the algorithm computes an untangled linear order $<_{Y_i}$ of $c_1,\ldots,c_i$ such that the 2-layer drawing $(<_X, <_{Y_i})$ is $k$-planar (including the precounts) and $c_i$ is rightmost among all such untangled orderings of $c_1,\ldots,c_i$.
\end{restatable}

(The other centers $c_j$, $j<i$, might not be rightmost with respect to the computed order anymore. They are only rightmost with respect to centers placed earlier than themselves.)

\begin{proof}[Proof sketch]
    For each $1 \le j \le n$ let $S_j$ be the $2$-star with center~$c_j$ and two leaves~$a_j,b_j$.
    We consider the linear order $<_{Y_{i-1}}$ that has been computed by the algorithm after the first~$i-1$ steps.
    Inductively we may assume that this order is untangled and the 2-layer drawing $(<_X, <_{Y_{i-1}})$ is $k$-planar  (including precounts).
    The algorithm computes $<_{Y_i}$ by finding the rightmost \emph{valid} position in $<_{Y_{i-1}}$, that is, a position where $c_i$ can be placed while keeping the drawing $k$-planar and untangled (if such a position exists).
    Placing $c_i$ does not affect the crossing count of any edge which has already been placed, as the lexicographic labeling of the centers (which defines the insertion order) ensures that crossings between $S_i$ and either of the stars $S_1,\ldots,S_{i-1}$ are accounted for in the precounts of edges in $S_1,\ldots,S_{i-1}$.
    So, the only restrictions for the algorithm to place $c_i$ are the untangled ordering and the crossing count on the edges~$c_ia_i$ and~$c_ib_i$ in~$S_i$.
    From this, we deduce that the valid positions form an interval.
    It remains to show that the interval of valid positions is large enough.

    Let $P'$ be a largest set of centers from $c_1,\ldots,c_{i-1}$ such that in some untangled $k$-planar linear order all centers in $P'$ are placed to the left of $c_i$.
    Further let $c_q$ be the leftmost center in~$<_{Y_{i-1}}$, so $q<i$, that belongs to a 2-star $S_q$ that is interleaving with or disjoint from~$S_i$.
    So, $c_i$ must be placed to the left of this center $c_q$ to keep the order untangled, but all vertices left of $c_q$ in~$<_{Y_{i-1}}$ are nested below $S_i$.
    We show that in $<_{Y_{i-1}}$ there are at least $\lvert P'\rvert$ centers to the left of $c_q$.
    This, eventually, shows that the interval of valid positions is large enough, so that the algorithm also places $c_i$ with $\lvert P'\rvert$ centers to its left in $<_{Y_i}$, as desired.
\end{proof}

\cref{lem:2starAlgoCorrectness} establishes correctness of the algorithm.
The running time is in $O(n^2)$ for~$n$ given 2-stars: First, the precounts are computed by checking the relation between each pair of 2-stars.
Then, the algorithm follows the given order $<_X$ of the left leaves from right to left to determine the lexicographic labeling of the centers.
Finally, for each center $c_i$, the algorithm walks from left to right through the order $<_{Y_{i-1}}$ of already processed centers to find the rightmost position where $c_i$ can be inserted.
Checking the relation of a pair of 2-stars needs only constant time, leading to $O(n^2)$ total time.
In particular, the running time does not depend on $k$.
This proves \cref{thm:2star-algo}.

Binary search over all possible values of $k \in \{0,\ldots,2(n-1)\}$ yields the following.

\begin{corollary}
    \OSLCM can be solved in $O(n^2\log n)$ time for forests of 2-stars with all centers in the flexible layer.
\end{corollary}

We do not see a straightforward extension of the algorithm to inputs with leaves in $Y$.

\section{Approximation Algorithm}
\label{sec:approx}

In this section we derive a specific variant of the median heuristic of Eades and
Wormald~\cite{ew-ecdbg-Algorithmica94} and show that it is a
3-approximation algorithm for \OSLCM.
We call our variant \Algorithm and specify the way \Algorithm chooses medians and breaks ties among vertices with the same median to produce an order $<_\Algorithm$ of $Y$.
Heuristic \Algorithm uses the following rules for the choice of the medians:
\begin{enumerate}
    \item for every vertex $y \in Y$ with $\deg(y) = 2$, the median is
          rounded up; hence $\median(y)$ is the second neighbor of $y$ in the
          order $<_X$,
    \item for every vertex $y \in Y$ with $\deg(y) \ne 2$, the median is
          rounded down; hence $\median(y)$ is the $\floor{\deg(y)/2}$-th
          neighbor of $y$ in the order $<_X$.
\end{enumerate}
Now that the choice of medians is fixed, we can define the following concepts.
For any vertex $x \in X$, we define the \emph{bunch} of $x$ to be the set $\set{ y \in Y: \median(y) = x}$.
For every vertex $y \in Y$, we call the edge $(\median(y),y)$  a \emph{median edge}.
Edge $(x,y)$ is a \emph{left edge} when $x <_X \median(y)$, and is a \emph{right edge} when $\median(y) <_X x$.
Additionally, vertex $y \in Y$
is a \emph{$2$-vertex} when $\deg(y)=2$,
is a \emph{$4^\oplus$-vertex} when $\deg(y)$ is even and greater than $2$, and
is an \emph{odd vertex} when $\deg(y)$ is odd.%
Observe that (i)~every vertex in $Y$ is incident to exactly one median edge; (ii)~every odd vertex has the same number of left an right edges incident to it; (iii)~every $2$-vertex has exactly one left edge and no right edges incident to it; and (iv)~every $4^\oplus$-vertex has at least one left edge incident to it and one right edge more than it has left edges incident to it.
For any $2$-vertex~$y$, let $e=(x,y)$ be the only edge with $x \neq \median(y)$.
We have that $e$ is a left edge, and we call~$e$ a \emph{heavy edge} and $x$ the \emph{heavy neighbor} of~$y$.
Every edge that is neither a median edge nor a heavy edge is a \emph{light edge}.
Observe that, for a light edge $e=(x,y)$, we have $\deg(y) \ge 3$.%

Heuristic \Algorithm uses the following rules to break the ties in each bunch; see \cref{fig:tie-breaking-scheme}.
\begin{figure}[htb]
    \centering
    \includegraphics[page=1]{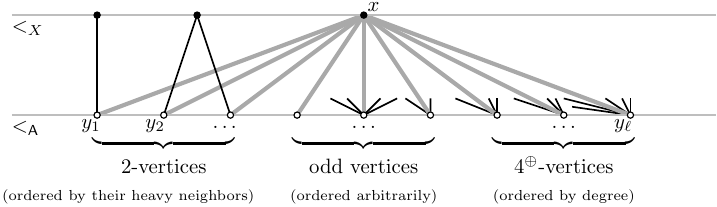}
    \caption{Our tie-breaking scheme}
    \label{fig:tie-breaking-scheme}
\end{figure}
Let $x \in X$, and let $\{y_1,\dots,y_\ell\}$ be the bunch of~$x$.
Algorithm~\Algorithm constructs the order
$y_1 <_\Algorithm y_2 <_\Algorithm \dots <_\Algorithm y_\ell$
of the bunch of $x$ using the following rules.
Let $\ell_1$, $\ell_2$, and $\ell_3$ be the numbers of $2$-vertices,
odd vertices, and $4^\oplus$-vertices in the bunch of $x$, respectively.
Note that $\ell = \ell_1 + \ell_2 + \ell_3$.
\begin{enumerate}
    \item The first $\ell_1$ vertices $y_1,y_2,\dots,y_{\ell_1}$ are
          the $2$-vertices ordered ascending by $<_X$ of their heavy neighbors.
          Ties are broken arbitrarily.
    \item The next $\ell_2$ vertices
          $y_{\ell_1+1},y_{\ell_1+2},\dots,y_{\ell_1+\ell_2}$ are the odd
          vertices in any order.
    \item The last $\ell_3$ vertices
          $y_{\ell_1+\ell_2+1},y_{\ell_2+\ell_2+2},\dots,y_{\ell_1+\ell_2+\ell_3}$
          are the $4^\oplus$-vertices ordered ascending by their degrees.
          Ties are broken arbitrarily.
\end{enumerate}
This concludes the definition of heuristic \Algorithm.
We now show that \Algorithm is a $3$-approximation algorithm for \OSLCM.
\begin{theorem}
    \label{thm:approximation}
    For every 2-layer network $G=(X, Y, E)$ and linear order $<_X$
    of $X$ such that the one-sided local crossing number of $(G,<_X)$
    is~$k$, algorithm~\Algorithm returns a $2$-layer drawing
    $(<_X,<_\Algorithm)$ whose local crossing number is at most~$3k$.
\end{theorem}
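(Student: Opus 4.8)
The plan is to bound the number of crossings of \emph{each} edge of the heuristic drawing $(<_X,<_\Algorithm)$ by comparing it, pair by pair, with a fixed optimal order. Let $<_\star$ be an order of $Y$ realizing the one-sided local crossing number~$k$, so every edge is crossed at most $k$ times in $(<_X,<_\star)$. Fix an edge $e=(x,y)$ of $G$ and consider any edge $(x',y')$ crossing $e$ in $(<_X,<_\Algorithm)$. Since the $X$-part of the crossing condition is identical in both drawings, such an edge crosses $e$ in $(<_X,<_\star)$ as well \emph{exactly when} $y$ and $y'$ are ordered the same way by $<_\Algorithm$ and by $<_\star$. Hence the \emph{agreeing} crossings of $e$ form a subset of the crossings of $e$ in the optimal drawing and number at most~$k$. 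Writing $D(e)$ for the remaining \emph{disagreeing} crossings --- those coming from pairs $\{y,y'\}$ that $<_\Algorithm$ and $<_\star$ order oppositely --- it remains to prove $D(e)\le 2k$, which yields the claimed bound $3k$.

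For the disagreement bound I would first record the monotonicity of the median order: if $<_\Algorithm$ places $y$ before $y'$, then $\median(y)\le_X\median(y')$, with equality exactly when $y$ and $y'$ share a bunch. I would then split $D(e)$ into a left-above part (crossing edges with $x'<_X x$, $y<_\Algorithm y'$, but $y'<_\star y$) and a symmetric right-below part. The aim is to charge the disagreeing crossing edges of $e$ to optimal crossings on a small set of reference edges attached to $y$ and to the disagreeing partner $y'$ (an extremal edge of $y$ or the median edge $m_y$), so that $k$-planarity of $(<_X,<_\star)$ caps their number. Here the left/right edge-count identities (ii)--(iv) for the vertex types are essential: an odd vertex has equally many left and right edges, a $2$-vertex has a single (heavy) left edge and no right edge, and a $4^\oplus$-vertex has exactly one more right edge than left edges. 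These identities are what keep the charge proportional to a single edge rather than to the whole bundle of $y$; in particular, $2$-vertices below $y$ contribute nothing to the right-below part, since they have no right edges.

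The delicate case is $\median(y)=\median(y')$, i.e.\ $y$ and $y'$ in the same bunch, because then the median supplies no $X$-information and the direction of disagreement is decided solely by the intra-bunch order of~\Algorithm. This is exactly where the tie-breaking rule enters: within a bunch \Algorithm places the $2$-vertices first (by the $<_X$-position of their heavy neighbors), then the odd vertices, and finally the $4^\oplus$-vertices by increasing degree. I would verify, vertex type by vertex type, that for any same-bunch disagreeing pair the crossings forced onto~$e$ still fit the budget, using that median edges are pairwise non-crossing, that heavy edges of $2$-vertices point only to the left, and that ordering $4^\oplus$-vertices by degree prevents their nested right edges from piling up on a single edge.

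The main obstacle is obtaining a \emph{tight} disagreement bound. A naive localization of the classical Eades--Wormald pairwise inequality only gives a bound of order $\deg(y)\cdot k$, because it counts crossings with the entire bundle of~$y$ instead of with the single edge~$e$. Overcoming this forces one to control the left-above and right-below disagreements of the \emph{same} edge simultaneously, exploiting the precise position of the neighbors relative to $\median(y)$ together with the vertex-type trichotomy; making both sides fit within the combined budget $2k$, rather than $k$ each, is the technically demanding step, and it is precisely why the tie-breaking rule is tailored as it is.
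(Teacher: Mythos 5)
Your opening reduction is correct as far as it goes: a crossing of $e=(x,y)$ with an edge $(x',y')$ survives the passage from $<_\Algorithm$ to $<_\star$ exactly when the two orders agree on the pair $\{y,y'\}$, so the \emph{agreeing} crossings of $e$ are a subset of its crossings in the optimal drawing and number at most $k$. The gap is the step you reduce everything to: the inequality $D(e)\le 2k$ for the disagreeing crossings is \emph{false}, so no charging scheme can establish it. The paper's own tight example in \cref{sec:tight} is a counterexample. There $Y_k=\{u,v,w\}$, every median heuristic outputs $\langle v,u,w\rangle$, while the optimal order is $\langle u,v,w\rangle$, and the edge $e=(x_{2k+3},v)$ crosses all $2k+1$ edges incident to $u$ in the heuristic drawing. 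Since the pair $\{v,u\}$ is ordered oppositely by the two orders, all $2k+1$ of these crossings are disagreeing ones, i.e., $D(e)=2k+1>2k$; the theorem survives for this edge only because the agreeing part is $k-1<k$ (the first $k-1$ edges of $w$). So the slack must be shared between the two parts of your decomposition; bounding them separately by $k$ and $2k$ cannot work, and the ``technically demanding step'' you defer is unattainable as stated.

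This is precisely why the paper takes a different decomposition. It classifies edges of the heuristic drawing into median, heavy, and light edges. Median edges are shown to have at most $k$ crossings by an exchange argument inside each bunch, which is where the tie-breaking rule (2-vertices first, ordered by heavy neighbor; then odd vertices; then $4^\oplus$-vertices by increasing degree) is actually used. Heavy edges are bounded by $3k$ via a charging scheme whose budget comes from \emph{two} edges in the optimal drawing, namely $e$ itself and the median edge of its endpoint $y$: a disagreement-type bound of $2k$ does hold there, but it is paid for by two distinct optimal budgets, not by $e$ alone. Light edges are bounded by $3k$ by splitting their crossings into those that also cross the median edge of $y$ (at most $k$, by the median-edge lemma, which compares against the optimum) and the intrusive edges of the valley formed by a left and a right edge at $y$ (at most $2k$, since each such edge crosses one of the two valley edges in the optimal drawing). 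Any repair of your approach would need a similarly global accounting that draws on the optimal crossing budget of more than one edge; in addition, note that even granting your (false) intermediate goal, the second half of your text is a plan rather than a proof --- the type-by-type verification is never carried out.
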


Let $\optord$ be an order of $Y$ such that the local crossing number of the $2$-layer drawing $(<_X,\optord)$ is~$k$.
Given an order $<$ of $Y$, an edge $e=(x,y)$, and a set $Z \subseteq Y$, we define $\crcnt{e}{Z}{<}$ to be the number of edges with an endpoint in $Z$ that cross edge $e$ in the $2$-layer drawing $(<_X,<)$; see \cref{fig:cross-count} for an example.
\begin{figure}[b]
    \centering
    \includegraphics[page=1]{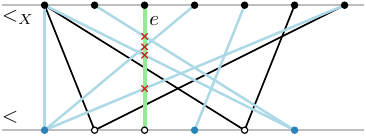}
    \caption{For edge $e$ (green), subset $Z$ of vertices (blue), and order~$<$ on~$Y$, we have $\crcnt{e}{Z}{<}=4$ (red crosses).
        There are two additional crossings with black edges that do not contribute to $\crcnt{e}{Z}{<}$.}
    \label{fig:cross-count}
\end{figure}
The explicit formula is%
\[
    \crcnt{e}{Z}{<} = \norm{\Big\{(w,z) \in E: z \in Z \text{ and } \big( (x <_X w \land z < y) \lor (w <_X x \land y < z) \big)\Big\}}\text{.}
\]
Observe that, for any two subsets $Z_1,Z_2 \subseteq Y$, we have $\crcnt{e}{Z_1\cup Z_2}{<} = \crcnt{e}{Z_1}{<} + \crcnt{e}{Z_2}{<} - \crcnt{e}{Z_1\cap Z_2}{<}$.
For any vertex $y \in Y$, we use $\crcnt{e}{y}{<}$ to denote $\crcnt{e}{\set{y}}{<}$, and we have $\crcnt{e}{Z}{<} = \sum_{z \in Z}\crcnt{e}{z}{<}$ for any subset $Z\subseteq Y$.
\cref{thm:approximation} can now be restated as:
\[
    \big(\exists\optord \colon \forall e \in E \colon \crcnt{e}{Y}{\optord} \le k\big) \implies
    \big(\forall {e \in E} \colon \crcnt{e}{Y}{<_\Algorithm} \le 3k\big)\text{.}
\]

We divide the proof of \cref{thm:approximation} into several lemmas.

\begin{lemma}
    \label{lem:median_cross}
    In the 2-layer drawing $(<_X,<_\Algorithm)$, no two median edges cross.
\end{lemma}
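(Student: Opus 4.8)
The plan is to unfold the definition of crossing directly for two median edges and to derive a contradiction from the way \Algorithm builds the order $<_\Algorithm$. Concretely, I would take two distinct vertices $y_1, y_2 \in Y$ and consider their median edges $e_1 = (\median(y_1), y_1)$ and $e_2 = (\median(y_2), y_2)$. Recall that $e_1$ and $e_2$ cross in $(<_X, <_\Algorithm)$ exactly when either $\median(y_1) <_X \median(y_2)$ and $y_2 <_\Algorithm y_1$, or $\median(y_2) <_X \median(y_1)$ and $y_1 <_\Algorithm y_2$. I would then argue that neither disjunct can occur, proceeding by a case analysis on the relative $<_X$-position of the two medians.

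First I would dispose of the degenerate situation $\median(y_1) = \median(y_2)$: here the two edges share their $X$-endpoint, so both crossing conditions fail because each requires a strict $<_X$-inequality between the two medians. In this case $y_1$ and $y_2$ lie in the same bunch, and however the tie-breaking rules order them inside the bunch, no crossing is produced. For the remaining cases I would use the single structural fact that $<_\Algorithm$ respects the $<_X$-order of medians across distinct bunches: the tie-breaking rules of \Algorithm reorder vertices only inside one bunch (vertices with a common median), so whenever $\median(y_1) <_X \median(y_2)$ we have $y_1 \sqsubseteq_Y y_2$ with distinct medians and hence $y_1 <_\Algorithm y_2$. Thus in the case $\median(y_1) <_X \median(y_2)$ the drawing places $y_1$ before $y_2$, contradicting the crossing requirement $y_2 <_\Algorithm y_1$; the case $\median(y_2) <_X \median(y_1)$ is symmetric. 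Since every case is impossible, no two median edges cross.

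There is no substantial obstacle here: the statement is essentially immediate from the fact that $<_\Algorithm$ is a linear extension of the median partial order $\sqsubseteq_Y$. The only point that deserves a careful sentence is confirming that the more elaborate tie-breaking of \Algorithm does not disturb the global median ordering --- it only rearranges vertices within a single bunch --- so the argument is genuinely the same as for the generic median heuristic.
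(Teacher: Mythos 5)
Your proof is correct and follows essentially the same route as the paper's: a case split on whether the two medians coincide (shared endpoint, no crossing) or are $<_X$-comparable, in which case $<_\Algorithm$ orders the $Y$-endpoints consistently with the medians because the tie-breaking only permutes vertices within a single bunch. The paper states this more tersely, but the argument is identical.
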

\begin{proof}
    Let $e_1=(x_1,y_1)$ and $e_2=(x_2,y_2)$ be two edges with $x_1=\median(y_1)$ and $x_2=\median(y_2)$.
    If $x_1 = x_2$, then the two edges share an endpoint in $X$ and do not cross.
    If $x_1 < x_2$, then by the definition of $<_\Algorithm$, we also have $y_1 <_\Algorithm y_2$, and the two edges do not cross.
\end{proof}

\begin{lemma}
    \label{lem:median}
    Every median edge $e$ satisfies $\crcnt{e}{Y}{<_\Algorithm} \le k$.
\end{lemma}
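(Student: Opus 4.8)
The plan is to fix a median edge $e=(x_0,y)$ with $x_0=\median(y)$ and to count its crossings vertex by vertex. For any order $<$ of $Y$ and any $z\in Y\setminus\{y\}$, the edges at $z$ crossing $e$ are exactly the right edges of $z$ (those with $X$-endpoint to the right of $x_0$) if $z<y$, and exactly the left edges of $z$ if $y<z$; writing $r_z,l_z$ for these counts, we get $\crcnt{e}{Y}{<}=\sum_{z<y}r_z+\sum_{y<z}l_z$, whose minimum over all orders is $\sum_{z\neq y}\min(r_z,l_z)$. Since any order places each $z$ so as to contribute $r_z$ or $l_z\ge\min(r_z,l_z)$, this minimum is a lower bound for $\crcnt{e}{Y}{\optord}\le k$, hence itself at most $k$; the whole difficulty is to control the gap between what $<_\Algorithm$ actually pays on $e$ and this per-edge minimum.

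First I would dispatch every vertex $z$ with $\median(z)\neq x_0$. Using the median rules one checks that $\median(z)<_X x_0$ forces both $r_z\le l_z$ and $z<_\Algorithm y$, while $\median(z)>_X x_0$ forces both $l_z\le r_z$ and $y<_\Algorithm z$; in either case $z$ contributes precisely $\min(r_z,l_z)$ to $\crcnt{e}{Y}{<_\Algorithm}$, consistent with \cref{lem:median_cross} since the median edge of $z$ always falls on the uncounted side. So only the bunch of $x_0$ can drive $<_\Algorithm$ above the per-edge minimum. Inside the bunch, observation~(ii) gives $r_z=l_z$ for every odd vertex, so odd vertices contribute $\min(r_z,l_z)$ wherever they sit, and by the three-block tie-breaking the only vertices that can contribute \emph{above} their minimum are those of the same type as $y$. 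For odd $y$ there is no excess at all, and $\crcnt{e}{Y}{<_\Algorithm}=\sum_{z\neq y}\min(r_z,l_z)\le k$ follows at once.

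The remaining two cases carry the real work. Let the bunch contain $p$ two-vertices. If $y$ is a $2$-vertex, the excess equals the number $t$ of bunch $2$-vertices placed after $y$ (each then contributing its heavy edge, i.e.\ $l_z=1$ instead of $0$); if $y$ is a $4^\oplus$-vertex, it equals the number of bunch $4^\oplus$-vertices placed before $y$ (each contributing $r_z=l_z+1$ instead of $l_z$). The obstacle is that this excess cannot be absorbed by comparing $e$ to itself under $\optord$: the optimal order is free to put exactly these asymmetric edges on their cheap side for $e$, so $\crcnt{e}{Y}{\optord}$ may equal the per-edge minimum and offer no slack. Instead I would charge the excess to a \emph{different} witness edge $\hat e=(x_0,\hat y)$ with the same endpoint $x_0$, choosing $\hat y$ extreme in $\optord$ within the relevant type-block so that under $\optord$ the \emph{entire} block piles up on $\hat e$. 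For a $2$-vertex $y$, take $\hat y$ to be the smallest bunch $2$-vertex in $\optord$; then all $p-1$ other bunch $2$-vertices lie after $\hat y$ and cross $\hat e$, while every vertex of another type still contributes at least its minimum, so $\crcnt{\hat e}{Y}{\optord}$ exceeds the common ``minimum'' part by $p-1\ge t$, giving $\crcnt{e}{Y}{<_\Algorithm}\le\crcnt{\hat e}{Y}{\optord}\le k$.

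For a $4^\oplus$-vertex $y=w_m$, where $w_1,\dots,w_q$ are the bunch $4^\oplus$-vertices listed by ascending degree exactly as \Algorithm orders them, I would take $\hat y$ to be the largest among $w_1,\dots,w_m$ in $\optord$. Under $\optord$ the other $m-1$ of these lie before $\hat e$ and each contributes $r_z=l_z+1$, so after cancelling the common part the witness beats the algorithm's count by $l_y-l_{\hat y}$; the one inequality to verify is therefore $l_{\hat y}\le l_y$, and this is \emph{exactly} where the degree-ascending tie-break is needed, since $\hat y\in\{w_1,\dots,w_m\}$ then guarantees $\deg(\hat y)\le\deg(y)$ and hence $l_{\hat y}\le l_y$. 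The main obstacle throughout is thus to pick the correct witness edge and to see that \Algorithm's tie-breaking makes the charging tight; once the witness is fixed, each remaining step is a routine comparison of the per-vertex contributions of $e$ under $<_\Algorithm$ with those of $\hat e$ under $\optord$.
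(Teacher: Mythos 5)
Your proof is correct, and its skeleton is the same as the paper's: split the crossings of $e=(x_0,y)$ into contributions from the bunch of $x_0$ and from the rest of $Y$, observe that the non-bunch part is already minimal under $<_\Algorithm$ (your per-vertex statement that each non-bunch vertex pays exactly $\min(r_z,l_z)$ is what the paper expresses via the constant $c$ and \cref{lem:median_cross}), and then charge the bunch excess to a \emph{witness} median edge $\hat e=(x_0,\hat y)$ in the same bunch evaluated under $\optord$, concluding $\crcnt{e}{Y}{<_\Algorithm} \le \crcnt{\hat e}{Y}{\optord} \le k$. Where you genuinely diverge is the $4^\oplus$ sub-case. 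The paper defines $b(<) = \max_j \bigl(b_j(<) - \floor{\deg(z_j)/2}\bigr)$ and proves $b(<_\Algorithm)\le b(<)$ for \emph{every} order $<$ by an exchange (bubble-sort) argument on adjacent out-of-order pairs of $4^\oplus$-vertices; you instead name the witness explicitly as the $\optord$-largest vertex among the degree-ascending prefix $w_1,\dots,w_m$ ending at $y$, which forces the other $m-1$ prefix vertices to pay $l_z+1$ on $\hat e$ and reduces the whole case to the single inequality $l_{\hat y}\le l_y$, exactly the point where the degree-ascending tie-break enters. This buys a shorter, swap-free argument that isolates the role of the tie-breaking rule, at the price of proving the witness claim only for $\optord$ rather than for all orders (which is all that is needed); your bookkeeping via $\crcnt{e}{Y}{<} = \sum_{z<y} r_z + \sum_{y<z} l_z$ and per-vertex minima is also a cleaner ledger than the paper's formula $d - \floor{\deg(z_i)/2} + a_i(<) + b_i(<)$, though it makes you re-derive by hand, for non-bunch vertices, what the paper gets directly from \cref{lem:median_cross}. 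Your treatment of the $2$-vertex case (witness $=$ the $\optord$-smallest bunch $2$-vertex) coincides with the paper's.
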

\begin{proof}
    Let $e=(x,y)$ and $Z=\set{z_1,z_2,\dots,z_\ell}$ be the bunch of $x$ with $z_1 <_\Algorithm z_2 <_\Algorithm \dots <_\Algorithm z_\ell$.
    Let $e_i=(x,z_i)$ for $1 \le i \le \ell$.
    We call the set $\set{e_1,e_2,\dots,e_\ell}$ to be a \emph{bunch}
    of edges.  We define $Z'=Y\setminus Z$ and, for each edge $e_i$ in
    the bunch, we show that
    $\crcnt{e_i}{Z}{<_\Algorithm} + \crcnt{e_i}{Z'}{<_\Algorithm} \le
        k$.  Then the lemma follows, as the median edge~$e$ is one of the
    edges in the bunch.

    Let $e_i$ be an edge in the bunch and let $f=(x',z')$ be an edge
    with $z' \in Z'$ that crosses~$e_i$ in the $2$-layer drawing
    $(<_X,<_\Algorithm)$.
    We have $x' \neq x$ and the median of $z'$ is different from~$x$.
    The definition of $<_\Algorithm$ implies that either
    $\forall i \in \{1,\dots,\ell\} \colon z' <_\Algorithm z_i$ or
    $\forall i \in \{1,\dots,\ell\} \colon z_i <_\Algorithm z'$.
    Thus, if $f$ crosses any of the edges in the bunch, it crosses
    every edge in the bunch.  We conclude that there is a constant~$c$
    such that, for every $i \in \{1,\dots,\ell\}$, we have
    $\crcnt{e_i}{Z'}{<_\Algorithm} = c$.%

    Consider any vertex $z' \in Z'$, and count $\crcnt{e}{z'}{<_\Algorithm}$.
    If $z' <_\Algorithm y$, then $\median(z') <_X x$ and only the right edges incident to $z'$ cross $e$.
    For any order $<$ of $Y$, if $z' < y$ then $\crcnt{e}{z'}{<} = \crcnt{e}{z'}{<_\Algorithm}$.
    If $y < z'$ then all the left edges and the median edge incident to $z'$ cross $e$ in the drawing $(<_X,<)$ and we get $\crcnt{e}{z'}{<} \ge \crcnt{e}{z'}{<_\Algorithm}$.
    If $y <_\Algorithm z'$, a similar argument also gives that $\crcnt{e}{z'}{<} \ge \crcnt{e}{z'}{<_\Algorithm}$.
    We conclude that $\crcnt{e_i}{Z'}{<} \ge \crcnt{e_i}{Z'}{<_\Algorithm} = c$ holds for every order~$<$ of~$Y$.

    Let $\ell_1$, $\ell_2$, and $\ell_3$ be respectively the number of $2$-vertices, odd vertices, and $4^\oplus$-vertices in the bunch of $x$.
    Let $d = \sum_{i=1}^{\ell} \floor{\frac{\deg(z_i)}{2}}$.
    For any order $<$ of $Y$, and $1 \le i \le \ell$, let $a_i(<)$ be the number of $2$-vertices $z \in Z$ with $z_i < z$ and $b_i(<)$ be the number of $4^\oplus$-vertices $z \in Z$ with $z < z_i$.
    It is an easy calculation that
    \[
        \crcnt{e_i}{Z}{<} = \left(\sum_{1\le j \le \ell, j \neq i} \floor{\frac{\deg(z_i)}{2}}\right) + a_i(<) + b_i(<) =  d - \floor{\frac{\deg(z_i)}{2}} + a_i(<) + b_i(<)\text{.}
    \]

    Now, if $\ell_1 > 0$, consider $2$-vertex $z_i$ for $1\le i \le\ell_1$.
    We have $b_i(<_\Algorithm) = 0$, and $a_i(<_\Algorithm)$ is maximized for $i=0$ with $a_i(<_\Algorithm) \le a_0(<_\Algorithm) = \ell_1-1$.
    In any order $<$ of $Y$, one of the $2$-vertices is the first in $<$, say $z_j$, and we have $a_j(<) = a_0(<_\Algorithm) = \ell_1-1$.
    We conclude that $\crcnt{e_j}{Z}{<} = d - 1 + a_j(<) + b_j(<) \ge d-1+a_0(<_\Algorithm)+0 \ge \crcnt{e_i}{Z}{<_\Algorithm}$ for every $1\le i \le \ell_1$.

    Now, if $\ell_2 > 0$, consider odd vertex $z_i$ for $\ell_1+1 \le i \le \ell_1 + \ell_2$.
    We have $a_i(<_\Algorithm) = 0$, and $b_i(<_\Algorithm)=0$, and we conclude that $\crcnt{e_i}{Z}{<} \ge \crcnt{e_i}{Z}{<_\Algorithm}$ for every $\ell_1+1\le i \le \ell_1+\ell_2$.

    Now, if $\ell_3 > 0$, consider $4^\oplus$-vertex $z_i$ for $\ell_1+\ell_2+1 \le i \le \ell$.
    We have that $a_i(<_\Algorithm) = 0$ and $4^\oplus$-vertices are ordered ascending by their degrees.
    Define
    \[
        b(<) = \max_{\ell_1+\ell_2+1 \le j \le \ell}\left(b_j(<) - \floor{\frac{\deg(z_j)}{2}}\right)\text{.}
    \]
    Note that one of the $4^\oplus$-vertices, say $z_j$ satisfies $\crcnt{e_j}{Z}{<} = d + b(<)$.
    We have that $\crcnt{e_i}{Z}{<_\Algorithm} = d + b_i(<_\Algorithm) - \floor{\frac{\deg(z_i)}{2}} \le d + b(<_\Algorithm)$.
    We claim that $b(<_\Algorithm) \le b(<)$ for every order $<$ of $Y$.
    To see that it is true, consider any order $<$.
    While there are two $4^\oplus$-vertices $z_p$, $z_q$, with $z_p < z_q$ that are consecutive in $<$ and have $\deg(z_p) > \deg(z_q)$, we can consider order $<'$ of $Y$ that is order $<$ with exchanged position of $z_p$ and $z_q$.
    We have $b_k(<') = b_k(<)$ for every $\ell_1+\ell_2+1 \le k \le \ell, k\neq p, k\neq q$.
    We have $b_p(<') = b_p(<)+1$, and $b_q(<') = b_q(<)-1$.
    As $\deg(z_p) > \deg(z_q)$, we get that $b(<') \le b(<)$.
    Observe that we can do such exchange without increasing the value of $b$ as long the $4^\oplus$-vertices are not sorted by their degrees.
    Let $<''$ be the order of $Y$ that we get in the end.
    When no more exchange is possible, we have that the sequence of degrees of $4^\oplus$-vertices is the same in $<_\Algorithm$ and in $<''$.
    Thus, $b(<_\Algorithm) = b(<'') \le b(<)$.
    Thus, for any order $<$ of $Y$, there exists $\ell_1+\ell_2+1 \le j \le \ell$ with $\crcnt{e_j}{Z}{<} \ge d + b(<) \ge d + b(<_\Algorithm) = \crcnt{e_i}{Z}{<_\Algorithm}$.

    In particular, applying our observations to $\optord$, for each edge $e_i$, we can find an edge $e_j$ with $\crcnt{e_j}{Z}{\optord} \ge \crcnt{e_i}{Z}{<_\Algorithm}$.
    Recall that we have also shown that $\crcnt{e_j}{Z'}{\optord} \ge c = \crcnt{e_i}{Z'}{<_\Algorithm}$.
    Thus, we have that $\crcnt{e_i}{Y}{<_\Algorithm} \le \crcnt{e_j}{Y}{\optord} \le k$.
\end{proof}

\begin{figure}[tb]
    \includegraphics[page=1]{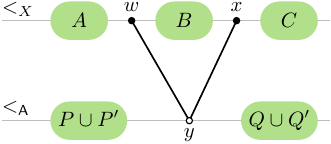}\hfill%
    \includegraphics[page=2]{two-Y-orders}
    \caption{Two orders of $Y$}
    \label{fig:heavy-edge}
\end{figure}

\begin{lemma}
    \label{lem:heavy}
    Every heavy edge $e$ satisfies $\crcnt{e}{Y}{<_\Algorithm} \le 3k$.
\end{lemma}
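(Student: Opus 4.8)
The plan is to bound the crossings on the heavy edge against the crossings that the optimal order $\optord$ is \emph{forced} to place on the two edges incident to~$y$: the heavy edge $e=(x,y)$ itself and the median edge $(m,y)$, where $m=\median(y)$ and $x<_X m$. Concretely, I would prove the single inequality
\[
    \crcnt{(x,y)}{Y}{<_\Algorithm} \;\le\; \crcnt{(x,y)}{Y}{\optord} \;+\; 2\,\crcnt{(m,y)}{Y}{\optord},
\]
and then finish at once: since $(<_X,\optord)$ is $k$-planar, both terms on the right are at most~$k$, so the left side is at most $3k$. The coefficient~$2$ on the median edge is exactly where the factor~$3$ is born, so getting that coefficient (rather than~$1$) to be correct is the whole game.

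To prove the inequality I would argue vertex by vertex, using $\crcnt{f}{Y}{<}=\sum_{z\in Y}\crcnt{f}{z}{<}$; thus it suffices to show $\crcnt{(x,y)}{z}{<_\Algorithm}\le \crcnt{(x,y)}{z}{\optord}+2\,\crcnt{(m,y)}{z}{\optord}$ for every $z\in Y$. Writing $r_c(z)$ and $l_c(z)$ for the number of neighbors of~$z$ lying to the right, resp.\ left, of a vertex $c\in\set{x,m}$ in $<_X$, the term $\crcnt{(x,y)}{z}{<_\Algorithm}$ equals $r_x(z)$ when $z<_\Algorithm y$ and $l_x(z)$ when $y<_\Algorithm z$ (and $0$ for $z=y$); the optimal terms are described the same way, with $<_\Algorithm$ replaced by $\optord$ and, for the median edge, $x$ replaced by~$m$. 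This splits the proof into four cases according to the side of~$y$ on which $z$ sits in $<_\Algorithm$ and in $\optord$. The two \emph{agreeing} cases are immediate: below in both orders needs $r_x(z)\le r_x(z)+2r_m(z)$, and above in both needs $l_x(z)\le l_x(z)+2l_m(z)$.

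The work is in the two \emph{disagreeing} cases, and this is where I expect the main obstacle. Take $z<_\Algorithm y$ but $y<_\star z$ (the other case is symmetric, swapping left for right and $x$ for~$m$); I must show $r_x(z)\le l_x(z)+2l_m(z)$. The key structural input is that $z<_\Algorithm y$ forces $\median(z)\le_X m$, so $z$ leans to the left of~$m$ and has many neighbors strictly left of~$m$. Using the counts of left and right edges recorded in observations~(ii)--(iv), whenever $\median(z)<_X m$ strictly one gets $2l_m(z)\ge\deg(z)$ (indeed $2l_m(z)\ge\deg(z)+1$ for odd~$z$), whence $r_x(z)\le\deg(z)\le 2l_m(z)\le l_x(z)+2l_m(z)$ with room to spare. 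This is exactly why the coefficient~$2$ is necessary: the median splits the neighbors roughly in half, so I can only guarantee about $\deg(z)/2$ neighbors on one side of~$m$, and I must double this to dominate $r_x(z)\le\deg(z)$.

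The delicate point is the boundary sub-case $\median(z)=m$, i.e.\ $z$ in the bunch of~$m$. There the estimate loses a unit for odd- and $4^\oplus$-vertices, because their median edge is no longer counted among the left neighbors, and $r_x(z)\le l_x(z)+2l_m(z)$ can fail by exactly~$1$ when $z$ also has all its neighbors to the right of~$x$. This is precisely the configuration the tie-breaking of~\Algorithm rules out: since $y$ is a $2$-vertex and the bunch of~$m$ lists all $2$-vertices before any odd- or $4^\oplus$-vertex, every odd- or $4^\oplus$-vertex~$z$ with $\median(z)=m$ satisfies $y<_\Algorithm z$ and hence lands in the symmetric, safe case rather than in the failing one. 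The only vertices with $\median(z)=m$ that can lie below~$y$ are $2$-vertices themselves, and for these $\deg(z)=2$ makes the inequality trivial. I would therefore organize the boundary analysis by the type of~$z$ (odd, $2$-vertex, or $4^\oplus$) and invoke the placement forced by the tie-breaking to exclude the single bad configuration; checking that this placement suffices in every boundary sub-case is the crux of the argument.
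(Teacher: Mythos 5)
Your proposal is correct, and it establishes the same master inequality that drives the paper's proof---the crossings of the heavy edge $e=(x,y)$ under $<_\Algorithm$ are charged once against $\crcnt{e}{Y}{\optord}$ and twice against $\crcnt{f}{Y}{\optord}$, where $f=(m,y)$ is the median edge---but your bookkeeping is genuinely different, and the difference has some substance. The paper never descends to single vertices: it partitions $X$ into five blocks (left of $x$, $\set{x}$, between $x$ and $m$, $\set{m}$, right of $m$) and $Y\setminus\set{y}$ into the four agree/disagree classes, writes all crossing counts as sums of block-pair quantities $c_{\alpha,\beta}$, and handles the two disagreeing classes with two injective edge-mapping claims; in that accounting every disagreeing vertex is charged entirely to the median edge's optimal budget (with the factor $2$), while the heavy edge's own optimal budget covers only the agreeing vertices. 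Your per-vertex inequality $\crcnt{e}{z}{<_\Algorithm}\le\crcnt{e}{z}{\optord}+2\,\crcnt{f}{z}{\optord}$ lets each disagreeing vertex spend from both budgets simultaneously, and this is exactly why your cases close using only the ``$2$-vertices before odd and $4^\oplus$-vertices'' half of the tie-breaking rule: a $2$-vertex $z$ of the bunch of $m$ lying on the wrong side of $y$ satisfies $r_x(z)\le 2\le l_x(z)+2l_m(z)$ (respectively $l_x(z)\le 1\le r_x(z)+2r_m(z)$, since the median edge of $z$ ends at $m>_X x$) whether or not its heavy neighbor respects the within-bunch order, whereas both of the paper's injection claims explicitly invoke the ordering of $2$-vertices by heavy neighbors. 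Your deferred crux---that odd and $4^\oplus$-vertices of the bunch of $m$, which the tie-breaking forces above $y$, satisfy the safe inequality $l_x(z)\le r_x(z)+2r_m(z)$---does go through: such a $z$ has at most $\floor{\deg(z)/2}$ neighbors left of $x$, while already $r_x(z)\ge\floor{\deg(z)/2}+1$ because all right edges of $z$ and its median edge end strictly to the right of $x$. So your route is sound, somewhat more modular (one pointwise inequality summed over $Y$ replaces two injection lemmas plus a closing chain of estimates), and it isolates more sharply which part of the tie-breaking scheme this particular lemma actually needs.
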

\begin{proof}
    Let $e=(w,y)$, $x = \median(y)$, and $f=(x,y)$.
    We have $w <_X x$.
    We define the following subsets of vertices; see
    \cref{fig:heavy-edge}:\vspace{-3ex}

    \[
        \begin{array}{c}
            A = \set{a \in X: a <_X w},\quad \hfill
            B = \set{b \in X: w <_X b <_X x},\quad \hfill
            C = \set{c \in X: x <_X c},                                    \\
            P = \set{p \in Y: p <_\Algorithm y \text{ and } p \optord y}, \quad \hfill
            P' = \set{p \in Y: p <_\Algorithm y \text{ and } y \optord p}, \\
            Q = \set{q \in Y: y <_\Algorithm q \text{ and } y \optord q}, \quad \hfill
            Q' = \set{q \in Y: y <_\Algorithm q \text{ and } q \optord y}\text{.}
        \end{array}
    \]
    The sets~$P$ and~$P'$ consist of the vertices that are to the left of $y$ in $<_\Algorithm$.
    The sets~$Q$ and~$Q'$ consist of the vertices that are to the right of $y$ in $<_\Algorithm$.
    The sets~$P$ and~$Q$ consist of the vertices for which $<_\Algorithm$ and $\optord$ agree on the position relative to $y$.
    The sets~$P'$ and~$Q'$ consist of the vertices for which $<_\Algorithm$ and $\optord$ disagree on.

    To present our calculation, we use variables $c_{\alpha,\beta}$
    with $\alpha$ being $A$, $w$, $B$, $x$ or $C$ and $\beta$ being
    $P$, $P'$, $Q$ or $Q'$ to denote the number of edges in $E$ with
    one endpoint being vertex $\alpha$ or element of the set $\alpha$
    and the second endpoint in $\beta$.

    Counting the number of edges that cross edges $e$ and $f$ in the drawing $(<_X,\optord)$, we get:
    \begin{gather}
        \crcnt{e}{Y}{\optord} = c_{A,Q} + c_{B,P} + c_{x,P} + c_{C,P} +
        c_{A,P'} + c_{B,Q'} + c_{x,Q'} + c_{C,Q'} \le k\text{,} \label{eq:wy}\\
        \crcnt{f}{Y}{\optord} = c_{A,Q} + c_{w,Q} + c_{B,Q} + c_{C,P} +
        c_{A,P'} + c_{w,P'} + c_{B,P'} + c_{C,Q'} \le k\text{.} \label{eq:xy}
    \end{gather}
    Our next claim is that
    \begin{gather}
        c_{A,Q'} \le c_{C,Q'}\text{.} \label{eq:AQ}
    \end{gather}
    To prove this, we construct an injective mapping from $E[A,Q']$ to $E[C,Q']$.
    Consider any edge $g=(a,q')$ with $a \in A$, and $q' \in Q'$.
    Edge $g$ crosses $f$ and $y <_\Algorithm q'$, so $g$ is a left edge and $\median(q') = x$ or $x <_X \median(q')$.
    If $q'$ is a $2$-vertex, then we have $x <_X \median(q')$, as $2$-vertices in the bunch of $x$ are ordered ascending by their heavy neighbors.
    Thus, the median edge $(\median(q'),q')$ is in $E[C,Q']$ and we can map $g$ to this edge.
    Otherwise, we have $\deg(q') \ge 3$, and each right edge incident to $q'$ is in $E[C,Q']$ and there are at least as many right edges incident to $q'$ as there are left edges incident to $q'$.
    Thus we can injectively map all edges in $E[A,q']$ to $E[C,q']$ and the claim in \cref{eq:AQ} follows.
    Next, we claim that
    \begin{gather}
        c_{x,P'} + c_{C,P'} \le 2c_{A,P'} + 2c_{w,P'} + c_{B,P'}\text{.}
        \label{eq:CP}
    \end{gather}
    Let $g=(c,p')$ be an edge with $c=x$ or $c \in C$ and $p' \in P'$.
    Edge $g$ crosses $e$ in the drawing $(<_X,<_\Algorithm)$ and by the definition of $<_\Algorithm$, $g$ is either a median edge, or a right edge.
    If $g$ is a median edge then $c=x$, as otherwise $g$ would cross $f$ in the drawing $(<_X,<_\Algorithm)$ which is not possible by \cref{lem:median_cross}.
    Again, by the definition of $<_\Algorithm$, we get that $p'$ is a $2$-vertex.
    Let $u$ be the heavy neighbor of $p'$ and we have that $u <_X w$ or $u = w$.
    Thus we can injectively map the set of median edges in $E[x,P']\cup E[C,P']$ to $E[A,P']\cup E[w,P']$.
    If $g$ is a right edge, then $\deg(p') \ge 3$ and $\median(p') <_X x$.
    Thus, the median edge and each left edge incident to $p'$ is in $E[A,p']\cup E[w,p']\cup E[B,p']$.
    Thus we can injectively map the set of right edges in $E[x,p']\cup E[C,p']$ to $E[A,p']\cup E[w,p']\cup E[B,p']$.
    The claim in \cref{eq:CP} follows by combining the observations for the median edges and for the right edges.

    We are ready to calculate the bound on the number of crossings on
    edge~$e$:\vspace{-3ex}

    \[
        \begin{array}{@{}l@{~~}c@{~~}ll@{}}
            \crcnt{e}{Y}{<_\Algorithm} & =   & \left(c_{A,Q} + c_{B,P} + c_{x,P} + c_{C,P}\right) + c_{A,Q'} + c_{B,P'} + c_{x,P'} + c_{C,P'} & \text{ by \cref{eq:wy}} \\
                                       & \le & k + \left(c_{A,Q'}\right) + c_{B,P'} + c_{x,P'} + c_{C,P'}                                     & \text{ by \cref{eq:AQ}} \\
                                       & \le & k + c_{C,Q'} + c_{B,P'} + \left(c_{x,P'} + c_{C,P'}\right)                                     & \text{ by \cref{eq:CP}} \\
                                       & \le & k + c_{C,Q'} + c_{B,P'} + 2c_{A,P'} + 2c_{w,P'} + c_{B,P'}                                                               \\
                                       & \le & k + 2\left(c_{B,P'} + c_{A,P'} + c_{w,P'} + c_{C,Q'}\right)                                    & \text{ by \cref{eq:xy}} \\
                                       & \le & k + 2k = 3k\text{.}                                                                            & \hfill\qedhere          %
        \end{array}
    \]
\end{proof}%
We say that two edges $(w,y)$ and $(x,y)$ with $w <_X x$ form a
\emph{valley} $\langle w,y,x \rangle$.
For a valley $\langle w,y,x \rangle$, any edge $(x',y')$ with
$y' \neq y$ and $w <_X x' <_X x$ is an \emph{intrusive edge}.
\begin{lemma}
    \label{lem:valley}
    Every valley has at most $2k$ intrusive edges.
\end{lemma}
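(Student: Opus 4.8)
The plan is to work not in the heuristic's order $<_\Algorithm$ but in the optimal order $\optord$, exploiting a key point: the quantity we want to bound, namely the number of intrusive edges of a fixed valley $\langle w,y,x\rangle$, does not depend on the order of $Y$ at all. It is a purely combinatorial count determined by $<_X$ and $E$ (the edges $(x',y')$ with $y'\neq y$ and $w<_X x'<_X x$). Hence I am free to evaluate the forced crossings in whichever order of $Y$ is most convenient, and the convenient one is $\optord$, where the $k$-planarity hypothesis is available. Writing $e_w=(w,y)$ and $e_x=(x,y)$ for the two valley edges—both genuine edges of $G$—the assumption that $(<_X,\optord)$ has local crossing number $k$ immediately gives $\crcnt{e_w}{Y}{\optord}\le k$ and $\crcnt{e_x}{Y}{\optord}\le k$. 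The goal is then to charge every intrusive edge to a crossing with one of these two edges in the drawing $(<_X,\optord)$.

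The heart of the argument is the observation that, in $\optord$, every intrusive edge crosses \emph{exactly one} of $e_w$ and $e_x$. Fix an intrusive edge $g=(x',y')$, so $w<_X x'<_X x$ and $y'\neq y$. Since $\optord$ is a linear order, exactly one of $y'\optord y$ and $y\optord y'$ holds. If $y'\optord y$, then because $w<_X x'$ the crossing condition for $g$ and $e_w$ is satisfied, so $g$ crosses $e_w$; at the same time $x'<_X x$ prevents $g$ from crossing $e_x$. Symmetrically, if $y\optord y'$ then $g$ crosses $e_x$ and not $e_w$. Consequently the intrusive edges split into two disjoint groups: those with $y'\optord y$, all of which cross $e_w$, and those with $y\optord y'$, all of which cross $e_x$.

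It then remains only to count. The intrusive edges crossing $e_w$ form a subset of all edges crossing $e_w$ in $(<_X,\optord)$, of which there are at most $\crcnt{e_w}{Y}{\optord}\le k$; likewise at most $k$ intrusive edges cross $e_x$. Since the two groups are disjoint and exhaust all intrusive edges, their total is at most $2k$, as claimed. The step I expect to be the conceptual obstacle is not any calculation but the decision to pass to $\optord$ in the first place: one must notice that although intrusiveness is defined without reference to any order of $Y$, the local crossing bound of $k$ is only guaranteed in $\optord$, and that it is precisely the linearity of $\optord$ together with the sandwich $w<_X x'<_X x$ that turns each intrusive edge into a single forced crossing with one valley edge, thereby converting the per-edge bound $k$ into a global count $2k$ of intrusive edges.
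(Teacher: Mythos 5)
Your proof is correct and is essentially the paper's own argument: the paper likewise evaluates the two valley edges $(w,y)$ and $(x,y)$ in the optimal order $\optord$, notes that each has at most $k$ crossings there, and observes that every intrusive edge must cross one of them, giving the $2k$ bound. Your elaboration that each intrusive edge crosses \emph{exactly} one of the two (and that intrusiveness is order-independent) is a slightly more explicit rendering of the same charging argument.
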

\begin{proof}
    Let $e'=(x',y')$ be an intrusive edge for a valley
    $\langle w,y,x \rangle$, let $e_1=(w,y)$, and let $e_2=(x,y)$.  As
    the number of crossings on each of $e_1$ and $e_2$ is at most $k$ in
    the $2$-layer drawing $(<_X,\optord)$, and edge $e'$ crosses either
    $e_1$ or $e_2$ in this drawing, we get that the total number of
    intrusive edges is at most~$2k$.
\end{proof}

\begin{lemma}
    \label{lem:light}
    Every light edge $e$ satisfies $\crcnt{e}{Y}{<_\Algorithm} \le 3k$.
\end{lemma}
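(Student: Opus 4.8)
The plan is to bound the crossings on a light edge $e=(x,y)$ by combining the median bound (\cref{lem:median}) for the median edge $f=(\median(y),y)$ of the \emph{same} vertex $y$ with the valley bound (\cref{lem:valley}), but applied to a cleverly chosen valley. Write $m=\median(y)$ and $f=(m,y)$. Since $e$ is a light edge we have $\deg(y)\ge 3$, so by the structural properties of odd and $4^\oplus$-vertices the vertex $y$ has at least one left edge and at least one right edge; in particular its leftmost neighbor $x_L$ satisfies $x_L<_X m$ and its rightmost neighbor $x_R$ satisfies $m<_X x_R$. I treat the case where $e$ is a right edge, i.e.\ $m<_X x$; the case where $e$ is a left edge is handled by the mirror-image argument using $x_R$ and the valley $\langle x,y,x_R\rangle$ in place of $x_L$ and $\langle x_L,y,x\rangle$.

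First I would split the edges that cross $e$ in $(<_X,<_\Algorithm)$ into those that also cross $f$ and those that do not. The edges crossing both $e$ and $f$ form a subset of the edges crossing $f$, so their number is at most $\crcnt{f}{Y}{<_\Algorithm}$, which is $\le k$ by \cref{lem:median}, as $f$ is a median edge. For the remaining edges, a short case analysis on the $X$-endpoint $w$ of such an edge $g=(w,z)$ shows that only $m\le_X w<_X x$ is possible and that $z\neq y$: if $w<_X m$ or $x<_X w$ then $g$ is forced to cross $f$ as well, while $w=x$ would make $g$ share the endpoint $x$ with $e$.

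The key step, and the main obstacle, is to control these remaining edges — in particular the ones incident to $m$, which share the endpoint $m$ with $f$ and therefore can never be charged to $f$. The trick is to build the valley not from $e$ and its own median edge, but from $e$ together with the \emph{extremal} edge $(x_L,y)$, yielding the valley $\langle x_L,y,x\rangle$. Because $x_L<_X m$ holds \emph{strictly} (this is exactly where $\deg(y)\ge 3$ enters, via the fact that every odd vertex and every $4^\oplus$-vertex has a left edge), every edge that crosses $e$ but not $f$ has its $X$-endpoint in the open interval $(x_L,x)$ and a $Y$-endpoint different from $y$, hence is an intrusive edge of $\langle x_L,y,x\rangle$. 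By \cref{lem:valley} there are at most $2k$ such intrusive edges.

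Combining the two bounds gives $\crcnt{e}{Y}{<_\Algorithm}\le\crcnt{f}{Y}{<_\Algorithm}+2k\le k+2k=3k$, as required. I expect the only delicate points to be the verification that $x_L$ lies strictly to the left of $m$ for every light-edge vertex (which rests on properties (ii) and (iv) of odd and $4^\oplus$-vertices) and the routine case distinction certifying that each edge crossing $e$ but not $f$ is intrusive to $\langle x_L,y,x\rangle$; both are elementary once the valley is anchored at the extremal neighbor rather than at the median.
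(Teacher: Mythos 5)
Your proposal is correct and takes essentially the same route as the paper: both decompose the crossings on a light edge into (a) edges that also cross the median edge of the same vertex $y$, bounded by $k$ via \cref{lem:median}, and (b) edges that are intrusive to a valley formed by two edges at $y$ spanning the median, bounded by $2k$ via \cref{lem:valley}. The only cosmetic difference is that you anchor the valley at the \emph{extremal} neighbor $x_L$ (or $x_R$), whereas the paper uses an arbitrary neighbor on the opposite side of the median; either choice works, since any such neighbor lies strictly beyond $\median(y)$ and hence the edges crossing $e$ but not the median edge fall strictly inside the valley.
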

\begin{proof}
    Assume $e=(u,y)$ is a left edge, and let $w=\median(y)$.
    As $\deg(y) \ge 3$, let $x$ be the endpoint of some right edge $(x,y)$.
    We have $u <_X w <_X x$.
    Now, every edge that crosses $e$ either crosses the median
    edge $(w,y)$ or is an intrusive edge for the
    valley $\langle u,y,x \rangle$.
    By \cref{lem:median} and \cref{lem:valley}, we get that the total number of edges that cross $e$ is at most $k + 2k = 3k$.
    The proof for a right edge $e$ is symmetric.
\end{proof}

\begin{proof}[Proof of \cref{thm:approximation}]
    Each edge in $G$ is either a median edge, a heavy edge or a light
    edge.  Thus, by \cref{lem:median,lem:heavy,lem:light}, the claim of
    the theorem follows.
\end{proof}

\section{Lower Bound Example}
\label{sec:tight}

We construct a family of instances of
\textsc{One-Sided Local Crossing Minimization} where our median
heuristic yields solutions whose local crossing number is 3 times the
optimum; see \cref{fig:tightness}.

\begin{restatable}[\restateref{prop:lower-bound}]{proposition}{lowerBoundProp}
    \label{prop:lower-bound}
    For every integer $k \ge 2$, there is a 2-layer network
    $G_k=(X_k,Y_k,E_k)$ and a linear order~$<_{k}$ of~$X_k$ such that
    the one-sided local crossing number of $(G_k,<_{k})$ is~$k$ and the
    local crossing number of the solution returned by our median
    heuristic~\Algorithm is~$3k$.

    Removing a specific edge from~$G_k$ yields a 2-layer network~$G_k'$
    with one-sided local crossing number $k$ such that the local
    crossing number of the unique solution returned by {\em any} median
    heuristic applied to $(G_k',<_{k})$ is~$3k-1$.
\end{restatable}

\begin{figure}[htb]
    \centering
    \includegraphics{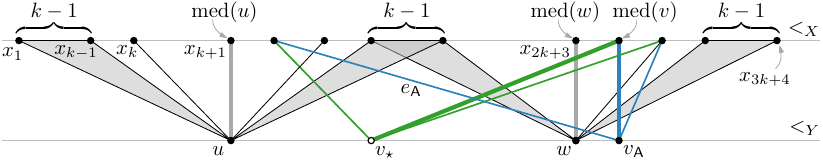}
    \caption{A family of instances where the median heuristic~\Algorithm yields
        solutions whose objective value is 3 times the optimum.  In an
        optimal solution, vertex~$v$ is at the position labeled~$v_\star$, whereas the median heuristic~\Algorithm places~$v$ at the position labeled~$v_{\Algorithm}$.}
    \label{fig:tightness}
\end{figure}

\section{Open Problems}
\label{sec:open}

The family of instances for \OSLCM that we construct in
\cref{sec:tight} shows that we cannot hope to improve the analysis of
the median heuristic or find a better tie-breaking scheme.
Is there a different approximation algorithm for \OSLCM with an approximation
ratio less than~3?  Or does \OSLCM even admit a polynomial-time
approximation scheme (PTAS)?
Furthermore, we provided a quadratic-time algorithm for \OSLCM on forests of stars of maximum degree 2, while the \NP-hardness reduction uses forests of high-degree stars.
A natural open question is to further investigate the (in)tractability of \OSLCM, for example, when the degree of the vertices on the flexible layer $Y$ is bounded by some constant. %

\bibliography{k-planarity}

\newpage
\appendix

\section{Omitted Proofs from \cref{sec:hard}}

\NPHard*
\label{thm:np-hard*}

\begin{proof}
    In the following, we show the correctness of the reduction we gave in the proof sketch.

    \subparagraph{Completeness.}
    Suppose that $\langle S, k \rangle$ is a yes-instance and hence admits a partition $S_1, \dots, S_k$ such that $\sum_{s \in S_i} s = T$ for every $i$.
    Let $<_Y$ be the partial order of $Y$ defined by $l <_Y p'_1 <_Y \dots <_Y p'_{k-1} <_Y r$.
    For each $1 \leq i \leq n$, we add the relations $p'_{i-1} <_Y u_j$
    and $u_j <_Y p'_i$ for every $j$ such that $s_j \in S_i$,
    where $p'_0 = l$ and $p'_k = r$.
    Note that, as the elements in $S$ are distinct,
    every $u_j$ is involved in exactly two of the new relations.
    We extend $<_Y$ arbitrarily to a linear order of~$Y$, and
    we claim that $(<_X, <_Y)$ is a 2-layer $k'$-planar drawing of~$G$.

    We divide edges into some cases according to their endpoints in $Y$, and show that in any case the number of crossings on an edge is at most $k'$ in the drawing $(<_X, <_Y)$.
    \begin{enumerate}
        \item
              Consider an edge incident to $l$ or $r$.
              This edge only crosses edges incident to $p'_i$'s, and hence at most $k-1$ edges in total.

        \item
              Consider an edge incident to $p'_i$ for some $1 \leq i \leq k-1$.
              Let us assume that this edge is $(p_i, p'_i)$.
              For the other edge $(p''_i, p'_i)$ a similar argument can be applied.
              This edge crosses the edges incident to a vertex in $Q_i, \dots, Q_{k-1}$, $(p''_j, p'_j)$ for every $j < i$, and the edges incident to $u_j$ such that $u_j <_Y p'_i$.
              Hence, the number of crossings on this edge can be bounded as follows.
              \begin{align*}
                     & (k-i) \cdot nT + (i-1) + \sum_{u_j <_Y p'_i} \deg(u_j)                                                           \\
                  =~ & (k-i) \cdot nT + (i-1) + \sum_{p'_0 < u_j <_Y p'_1} \deg(u_j) + \dots + \sum_{p'_{i-1} < u_j <_Y p'_i} \deg(u_j) \\
                  =~ & (k-i) \cdot nT + (i-1) + i \cdot nT \leq knT + (k-2) \leq k'.
              \end{align*}

        \item
              Lastly, consider an edge incident to some $u_i$.
              This edge crosses $k-1$ edges incident to $p'_j$'s and edges incident to $u_j$'s.
              Hence, the number of crossings on this edge is at most
              \begin{align*}
                  (k-1) + \sum_{u_j} \deg(u_j) = (k-1) + \sum_{1 \leq j \leq n} n s_j = (k-1) + knT = k'.
              \end{align*}
    \end{enumerate}

    \subparagraph{Soundness.}
    Suppose that $\langle G, <_X, k' \rangle$ is a yes-instance.
    Let $<_Y$ be a linear order of $Y$ such that $(<_X, <_Y)$ is a 2-layer $k'$-planar drawing of $G$.
    First, observe that $l$ and $r$ are the minimum and the maximum of $<_Y$, respectively; otherwise, there exists $y \in Y$ such that $y <_Y l$ or $r <_Y y$ holds, and an edge incident to $y$ must cross at least $k'+1$ edges due to $B$ or $B''$.
    Observe also that $p'_i <_Y p'_{i+1}$ holds for every $1 \leq i \leq k-2$.
    \begin{figure}[tb]
        \centering
        \includegraphics[page=2]{np-hardness}
        \caption{A 2-layer drawing $(<_X, <_Y)$ such that there exist $i < j$ with $p'_j <_Y p'_i$.}
        \label{fig:twisted-partitions}
    \end{figure}
    To observe this, suppose that there exist $i, j$ such that $i < j$ and $p'_j <_Y p'_i$.
    As in \cref{fig:twisted-partitions}, every edge incident to a vertex between $p_i$ and $p''_j$ in $<_X$ must cross (at least) either one of $(p_i, p'_i)$ and $(p''_j, p'_j)$.
    The number of such edges is at least $(k-i) \cdot nT + knT + j \cdot nT = (2k + j - i) \cdot nT \geq (2k + 1) \cdot nT > 2k'$, which implies that $(p_i, p'_i)$ or $(p''_j, p'_j)$ has at least $k'+1$ crossings.

    By the above observations, $l, p'_1, \dots, p'_{k-1}, r$ appears in this order of $<_Y$, and each $u_j$ is located between $p'_{i-1}$ and $p'_{i}$ for some $1 \leq i \leq k$, where $p'_{0} = l$ and $p'_{k} = r$.
    For each $1 \leq i \leq k$, let $S_i$ be the subset $\{ s_j \mid p'_{i-1} <_Y u_j <_Y p'_{i}\}$.
    It is clear that $S_1 \cup \dots \cup S_k = S$.
    We claim that $\sum_{s \in S_i} s = T$ holds for every $1 \leq i \leq k$, which implies that $\langle S, k \rangle$ is a yes-instance.

    We show that $\sum_{s \in S_i} s = T$ holds for each $1 \leq i \leq k-1$ by induction on $i$.
    Note that this also implies $\sum_{s \in S_i} s = T$ for $i = n$.
    For the base case where $i = 1$, we use the edges $(p_1, p'_1)$ and $(p''_1, p'_1)$.
    The edge $(p_1, p'_1)$ crosses $(k-1) \cdot nT$ edges incident to $l$ and $\sum_{s_j \in S_1} \deg(u_j) = n \sum_{s \in S_1} s$ edges incident to a vertex in $A_i$'s.
    Similarly, the edge $(p''_1, p'_1)$ crosses $nT$ edges incident to $r$ and the other $knT - n \sum_{s \in S_1} s$ edges incident to a vertex in $A_i$'s.
    This implies the following inequality by $k'$-planarity.
    \begin{align*}
        \max \left\{ (k-1) \cdot nT + n \sum_{s \in S_1} s, nT + knT - n \sum_{s \in S_1} s \right\} \leq k' = knT + (k - 1).
    \end{align*}
    This then implies
    \begin{align*}
        T - \left( \frac{k-1}{n} \right) \leq \sum_{s \in S_1} s \leq T + \left( \frac{k-1}{n} \right).
    \end{align*}
    Hence, as $k \leq n$, $\sum_{s \in S_1} s = T$ follows.
    Applying the same discussion to the edges $(p_i, p'_i)$ and $(p''_i, p'_i)$, we obtain $\sum_{s \in S_1} s + \dots + \sum_{s \in S_i} s = i T$ for every $i \leq k-1$.
    Hence, inductively we can show $\sum_{s \in S_i} s = T$ for each $i \leq k-1$.
\end{proof}

\ETHBasedLowerBound*
\label{thm:ETHBasedLowerBound*}

\begin{proof}
    Suppose the existence of such an algorithm $\mathcal{A}$ for a contradiction.
    Let $\langle S, k \rangle$ be an instance of \kWayPartition with $n = |S|$ and $T = \sum_{s \in S} s / k$.
    We show that then \kWayPartition can be solved in time $2^{o(n)} T^{O(1)}$, which is impossible under the ETH~\cite{BringmannDW2026}.
    If $n < k$, we simply return No.
    Otherwise, with the reduction used in \cref{thm:np-hard}, we obtain an equivalent instance $\langle G = (X, Y, E), <_X, k' \rangle$ of \OSkP such that $|X| = O(knT) = O(n^2T)$, $|Y| = O(k + n) = O(n)$, and $k' = O(knT) = O(n^2T)$.
    With algorithm $\mathcal{A}$ we can solve this equivalent instance in time $2^{o(n)} \mathrm{poly}(n^2T) = 2^{o(n)} T^{O(1)}$.
\end{proof}

\HeldKarpDP*
\label{obs:HeldKarpDP*}

\begin{proof}
    For a vertex subset $S \subseteq Y$, let us define $\mathrm{dp}(S)$ to be true if there exists a linear order $<_{S}$ of $S$ such that, for every linear order $<_Y$ that contains $<_{S}$ as a prefix, each of the edges incident to a vertex in $S$ has at most $k$ crossings in drawing $(<_X, <_Y)$, and to be false otherwise.
    Then, it is not difficult to see that this Boolean value can be computed with the following recurrence.
    For an edge $e = (x,y) \in E$ and a vertex set $S \subseteq Y$ such that $y \in S$, let $X_1 = \{x' \in X \mid x' <_X x \}$, $X_2 = \{x' \in X \mid x <_X x'\}$, $Y_1 = S \setminus \{y\}$, and $Y_2 = Y \setminus S$.
    Then, $\mathrm{cr}(e = (x,y), S)$ denotes the number of edges between $X_1$ and $Y_2$, or between $X_2$ and $Y_1$.
    \begin{align*}
        \mathrm{dp}(S) = \bigvee_{y \in S} \left( \mathrm{dp}(S \setminus \{y\}) \land \bigwedge_{e = (x, y) \in E} \mathrm{cr}(e, S) \leq k \right).
    \end{align*}
    With memoization we can compute the answer $\mathrm{dp}(Y)$ in $O^*(2^{|Y|})$ time.
\end{proof}

\section{Omitted Proofs from \cref{sec:2stars}}\label{app:2staralgo}

\untangleLemma*
\label{lem:well-ordering*}

\begin{proof}
    Assume that there is a tangled $k$-planar 2-layer drawing~$(<_X, <_Y)$ of $G$.
    We show that we can switch centers of tangled pairs until the drawing is untangled while maintaining $k$-planarity.

    \proofsubparagraph*{Eliminating all disjoint tangled pairs.}
    First assume that there is a disjoint tangled pair.
    Let $c_1$ and $c_2$ with $c_2 <_Y c_1$ be two centers that are closest to each other in~$<_Y$ among all tangled, disjoint pairs.
    Let $S_1$ and~$S_2$ be the corresponding $2$-stars with leaves $a_1 <_X b_1$ and~$a_2 <_X b_2$ respectively.
    We prove that switching $c_1$ and $c_2$ does not increase the number of crossings for any edge.
    Consider any other 2-star $S$ with center $c$.
    If $c$ is not between $c_1$ and $c_2$, then the crossings between $S_1$, $S_2$ and $S$ do not change by switching $c_1$ and $c_2$.
    If $c$ is between $c_1$ and $c_2$, several cases need to be considered based on the locations of the leaves of $S$.

    \cref{fig:untangling-disjoint} depicts all possible cases (up to symmetric cases obtained by left-right mirroring).
    If both leaves of $S$ are before $a_2$ (or, symmetrically, after $b_1$), then $S$ and $S_1$ (or $S_2$ and $S$) form a closer tangled disjoint pair, which %
    contradicts the choice of $c_1$ and $c_2$ being closest to each other in~$<_Y$ among all tangled, disjoint pairs (\cref{fig:untangling-disjoint-1}).
    In all remaining cases, one leaf of $S$ is before $b_1$, the other is after $a_2$, and the exchange of $c_1$ and $c_2$ does not increase the number of crossings for any of the involved edges (\cref{fig:untangling-disjoint-2,fig:untangling-disjoint-3,fig:untangling-disjoint-4}).

    It remains to show that switching~$c_1$ and~$c_2$ does not create new tangled disjoint pairs.
    Any such pair would have to consist of one of the stars~$S_1$ and~$S_2$ and a star~$S$ whose center~$c$ lies between~$c_1$ and~$c_2$.
    Yet, if $S$ and~$S_1$ (respectively~$S_2$) are tangled after switching~$c_1$ and~$c_2$, then $S$ and~$S_2$ (respectively~$S_1$) form a closer tangled disjoint pair to~$S_1$ and~$S_2$ in the previous order~$<_Y$, a contradiction.

    As we eliminate the disjoint tangled pair~$S_1$, $S_2$ we reduce the overall number of such pairs in each step.
    Iteratively, we obtain a $k$-planar drawing with no tangled disjoint pairs.

    \begin{figure}
        \centering
        \begin{subfigure}[t]{0.22\textwidth}
            \centering
            \includegraphics[page=2]{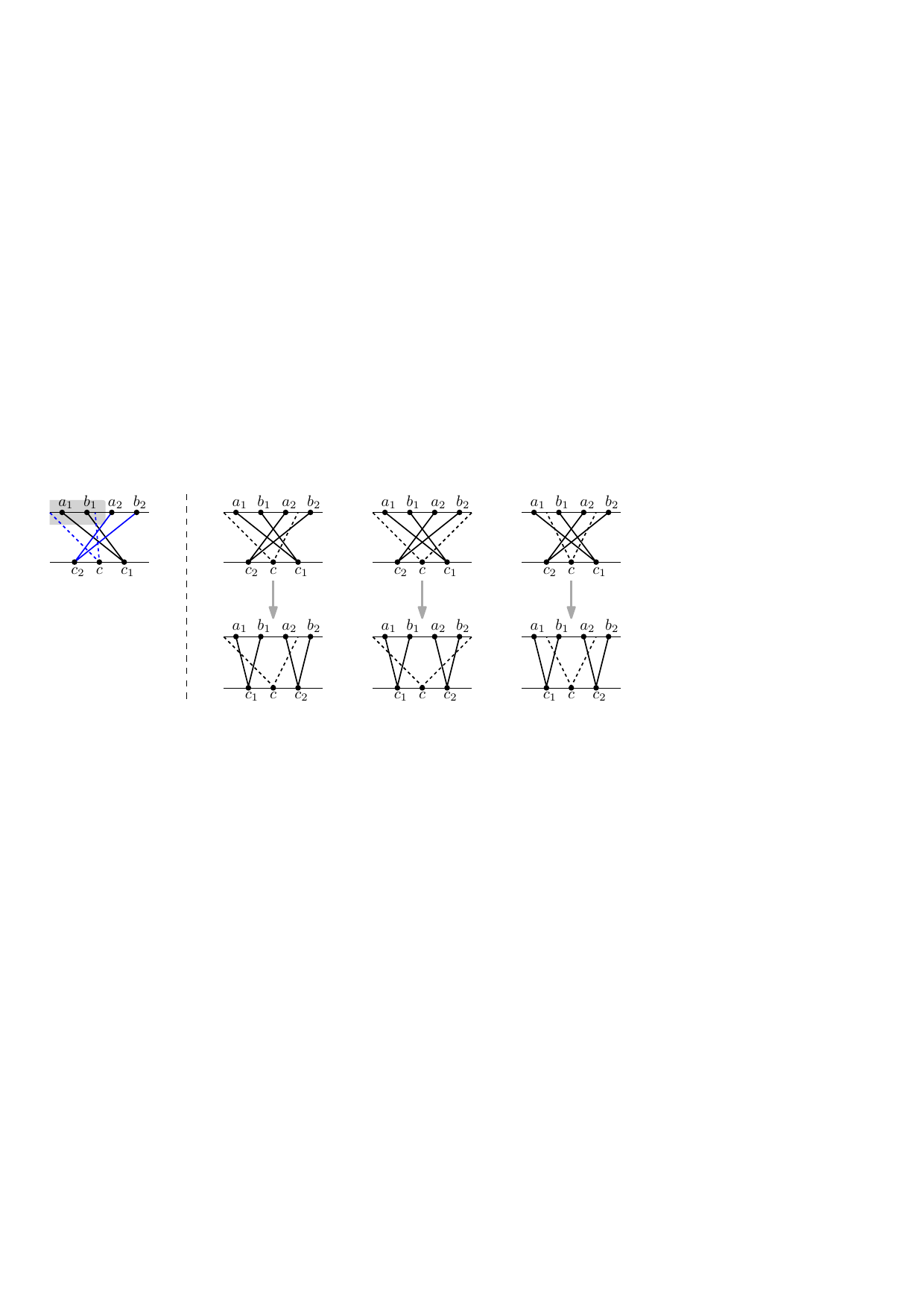}
            \subcaption{}
            \label{fig:untangling-disjoint-1}
        \end{subfigure}\hfill
        \begin{subfigure}[t]{0.22\textwidth}
            \centering
            \includegraphics[page=3]{untangling-disjoint}
            \subcaption{}
            \label{fig:untangling-disjoint-2}
        \end{subfigure}\hfill
        \begin{subfigure}[t]{0.22\textwidth}
            \centering
            \includegraphics[page=4]{untangling-disjoint}
            \subcaption{}
            \label{fig:untangling-disjoint-3}
        \end{subfigure}\hfill
        \begin{subfigure}[t]{0.22\textwidth}
            \centering
            \includegraphics[page=5]{untangling-disjoint}

            \subcaption{}
            \label{fig:untangling-disjoint-4}
        \end{subfigure}
        \caption{A 2-star $S$ with center $c$ between centers of a disjoint tangled pair. Either $S$ is disjoint and tangled with one of the other 2-stars ((\subref{fig:untangling-disjoint-1}) for any placement of the leaves in the gray region) or switching $c_1$ and $c_2$ does not increase the number of crossings on any edge (\subref{fig:untangling-disjoint-2})-(\subref{fig:untangling-disjoint-3}).} %
        \label{fig:untangling-disjoint}
    \end{figure}

    \proofsubparagraph*{Eliminating interleaving tangled pairs.}
    We may thus now assume that $(<_X,<_Y)$ contains no tangled disjoint pair.
    Consider two centers $c_1$ and $c_2$, with $c_2 <_Y c_1$, that are closest to each other in~$<_Y$ among all tangled (interleaving) pairs, and let $S_1$ and~$S_2$ be the corresponding $2$-stars.
    In this case, switching $c_1$ and $c_2$ may increase the number of crossings for some edges.
    However, we will prove that the drawing stays $k$-planar.%

    Let~$\Crossbefore(e)$ (respectively $\Crossafter(e)$) be the set of edges crossing an edge~$e \in E(G)$ before (after) the switch and let~$\crossbefore(e)$ (respectively $\crossafter(e)$) denote its size.
    We denote by $\kappabefore = \max_{e \in E(G)} \crossbefore(e)$ (respectively $\kappaafter$) the maximum number of times some edge~$e \in E(G)$ is crossed before (after) the switch.
    We show that
    \begin{enumerate}
        \item\label{itm:no_new_tangled} switching~$c_1$ and~$c_2$ does not create new tangled pairs, and
        \item\label{itm:k-planarity} $\kappaafter \leq \kappabefore$.
    \end{enumerate}
    So iteratively switching closest, tangled, interleaving pairs, yields an untangled and $k$-planar drawing.

    We first show that for stars $S \neq S_1,S_2$, we have~$\crossafter(e) = \crossbefore(e) \leq \kappabefore$ for all $e \in E(S)$ .
    This clearly holds if the center~$c$ of~$S$ does not lie between~$c_1$ and~$c_2$.
    Otherwise, that is if $c_2 <_Y c <_Y < c_1$, the star~$S$ is untangled with~$S_1$ and~$S_2$ since $S_1$ and~$S_2$ form a closest tangled pair.
    Thus, $S$ is either nested above both~$S_1$ and~$S_2$ or nested below both, see \cref{fig:untangling-interleaving_case_distinction}.
    In both cases (\cref{fig:untangling-interleaving_distinct-2}), the number of crossings on edges of~$S$ does not change by switching $c_1$ and $c_2$, i.e. $\crossbefore(e) = \crossafter(e)$ for all $e \in S$.
    \begin{figure}
        \centering
        \begin{subfigure}[t]{0.45\textwidth}
            \centering
            \includegraphics[page=2]{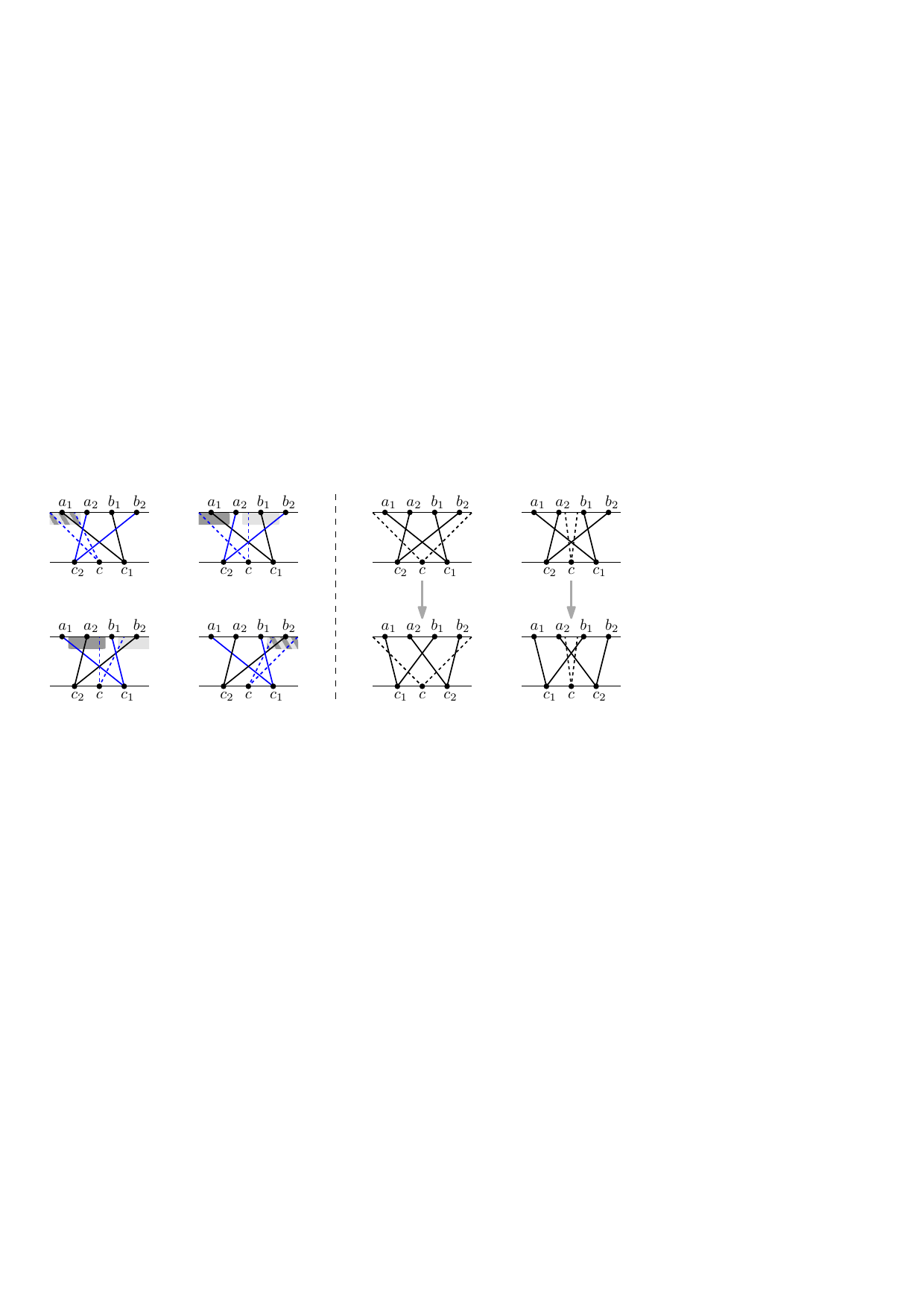}
            \subcaption{}
            \label{fig:untangling-interleaving_distinct-1}
        \end{subfigure}\hfill
        \begin{subfigure}[t]{0.45\textwidth}
            \centering
            \includegraphics[page=3]{untangling-interleaving_case_distinction.pdf}
            \subcaption{}
            \label{fig:untangling-interleaving_distinct-2}
        \end{subfigure}\hfill
        \caption{A $2$-star~$S$ with center~$c$ between centers of a closest interleaving tangled pair~$S_1,S_2$. (\subref{fig:untangling-interleaving_distinct-1}) If~$S$ is disjoint tangled or interleaving tangled with one of the $2$-stars~$S_1,S_2$ (for any placement of one leaf of~$S$ in the dark gray and the other in the light gray region) then $S,S_1$ or $S,S_2$ form a closer tangled pair, a contradiction to the choice of~$S_1,S_2$. (\subref{fig:untangling-interleaving_distinct-2}) Otherwise, $S$ is nested above both~$S_1$ and~$S_2$ or nested below both.}
        \label{fig:untangling-interleaving_case_distinction}
    \end{figure}

    Note that switching $c_1$ and $c_2$ does not create new tangled pairs, as centers between $c_1$ and $c_2$ belong to 2-stars that are nested with $S_1$ and $S_2$, that is \eqref{itm:no_new_tangled} follows.

    It remains to show $\crossafter(e) \leq \kappabefore$ for $e \in E(S_1) \cup E(S_2)$. %
    Recall that stars whose center lies between~$c_1$ and~$c_2$ either nest above both~$S_1$ and~$S_2$ or below both.
    Therefore, there are three types of edges crossing~$S_1$ and~$S_2$:
    \begin{enumerate}[(i)]
        \item\label{type:outer} edges of stars whose center does not lie between~$c_1$ and~$c_2$,
        \item\label{type:above} edges of stars that nest \emph{above} both~$S_1$ and~$S_2$,
        \item\label{type:below} and edges of stars that nest \emph{below} both~$S_1$ and~$S_2$ and whose center lies between~$c_1$ and~$c_2$.
    \end{enumerate}
    Observe that for every $g \in E(S_1) \cup E(S_2)$ the edges of type \eqref{type:outer} in $\Crossbefore(g)$ are the same as the edges of this type in $\Crossafter(g)$.
    For edges of type \eqref{type:above}, we have that an edge~$ac$ of a star~$S$ lies in~$\Crossbefore(g)$ for some $g \in E(S_1) \cup E(S_2)$ if and only if the other edge~$bc$ of~$S$ lies in~$\Crossafter(g)$, see left of \cref{fig:untangling-interleaving_distinct-2}.
    Thus, for each edge~$g \in E(S_1) \cup E(S_2)$ the number of crossings $\crossafter(g)$ is only affected by the set~$\mathcal{B}$ of edges of type~\eqref{type:below}: for $g \in \set{a_1c_1,b_2c_2}$ we have $\crossafter(g) = \crossbefore(g) - 2\abs{\mathcal{B}}$ and for $g \in \set{b_1c_1,a_2c_2}$ we have $\crossafter(g) = \crossbefore(g) + 2\abs{\mathcal{B}}$, see \cref{fig:untangling-interleaving-1}.
    If $\abs{\mathcal{B}} = 0$, we obtain $\crossafter(g) = \crossbefore(g) \leq \kappabefore$ for all $g \in E(S_1) \cup E(S_2)$, as desired.
    We therefore assume that $\mathcal{B}$ is non-empty.

    It remains to prove that $\crossafter(b_1c_1), \crossafter(a_2c_2) \leq \kappabefore$.
    Assume~$\crossafter(b_1c_1) \geq \crossafter(a_2c_2)$ (the other case can be handled with symmetric arguments).
    We may also assume that $\kappaafter=\crossafter(b_1c_1)$ and, hence,  $\crossbefore(a_1c_1),\crossbefore(b_2c_2) < \kappaafter$.
    We show that there exists an edge~$uv$ with $\crossafter(b_1c_1) \leq \crossbefore(uv)$; then $\kappaafter \leq \kappabefore$ follows.
    Observe that each edge $uv\in \Crossafter(b_1c_1)$ crosses $a_1c_1$ or $b_2c_2$ (or both) before the switch:
    If $u <_X b_1$, then $v$ is to the right of $c_1$ after the switch and $uv$ crosses $b_2c_2$ before the switch (and in some cases also $a_1c_1$), see \cref{fig:untangling-interleaving_properties_calE-1}, and if $b_1 <_X u$, then $v$ is to the left of $c_1$ after the switch and $uv$ crosses $a_1c_1$ before the switch (and in some cases also $b_2c_2$), see \cref{fig:untangling-interleaving_properties_calE-2}.
    That is, $\Crossafter(b_1c_1) \subseteq \Crossbefore(a_1c_1) \cup \Crossbefore(b_2c_2)$.
    \begin{figure}
        \centering
        \begin{subfigure}[t]{0.23\textwidth}
            \centering
            \includegraphics[page=2]{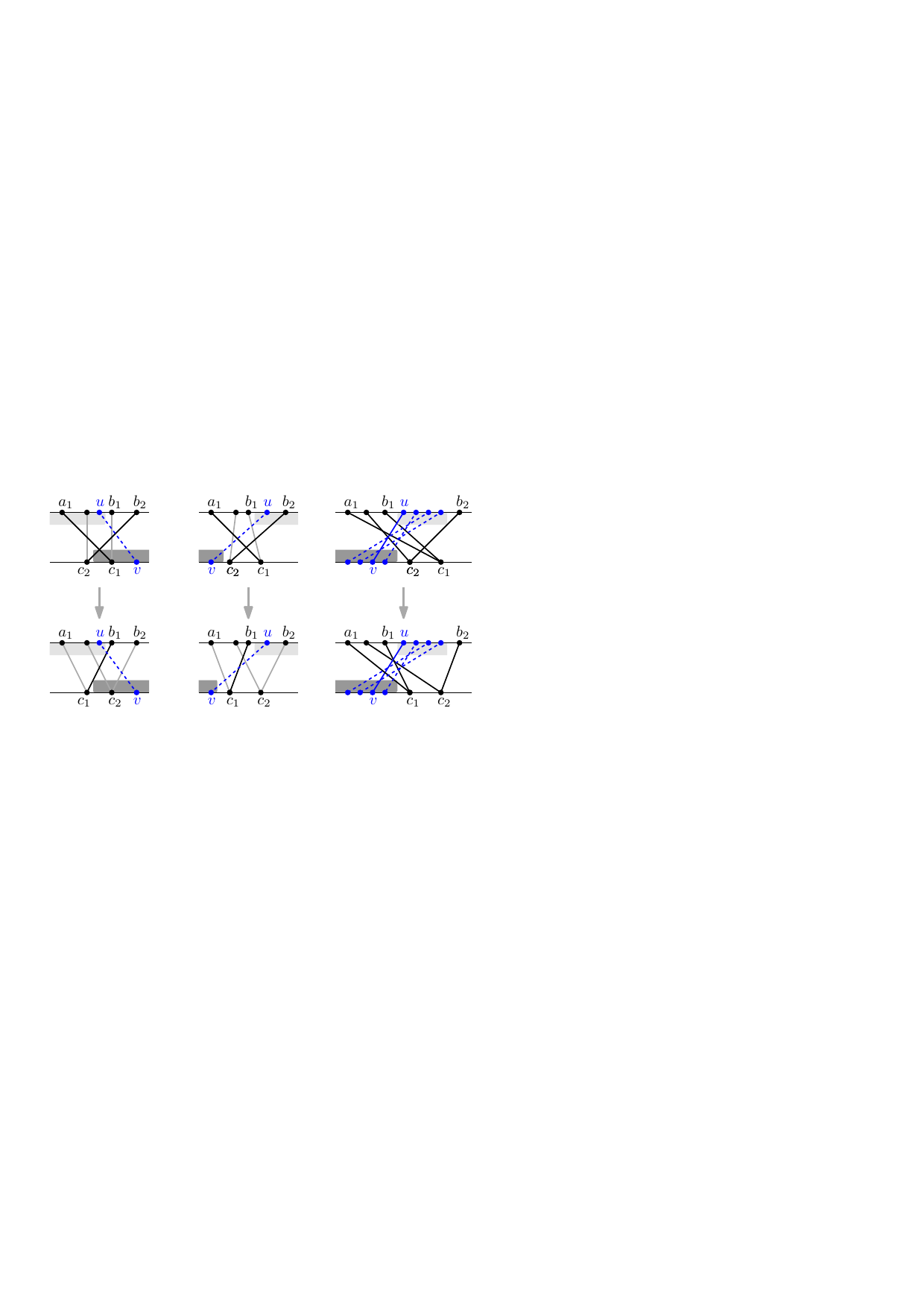}
            \subcaption{}
            \label{fig:untangling-interleaving_properties_calE-1}
        \end{subfigure}
        \begin{subfigure}[t]{0.23\textwidth}
            \centering
            \includegraphics[page=3]{untangling-interleaving_properties_calE.pdf}
            \subcaption{}
            \label{fig:untangling-interleaving_properties_calE-2}
        \end{subfigure}
        \begin{subfigure}[t]{0.35\textwidth}
            \centering
            \includegraphics[page=4]{untangling-interleaving_properties_calE.pdf}
            \subcaption{}
            \label{fig:untangling-interleaving_properties_calE-3}
        \end{subfigure}\hfill
        \caption{Every edge~$uv \in \Crossafter(b_1c_1)$ crosses~$a_1c_1$ or~$b_2c_2$ before the switch as either (\subref{fig:untangling-interleaving_properties_calE-1}) $u <_X b_1$ or (\subref{fig:untangling-interleaving_properties_calE-2}) $b_1 <_X u$ (for any placement of~$u$ in the light gray area, and of~$v$ in the dark gray area). (\subref{fig:untangling-interleaving_properties_calE-3}) We choose~$uv \in \Crossafter(b_1c_1) \setminus \Crossbefore(b_2c_2)$ (i.e. $u$ lies in the light gray area, $v$ in the dark gray area) such that~$u$ is closest to~$b_1$.}
        \label{fig:untangling-interleaving_properties_calE}
    \end{figure}
    Yet, by assumption, there must be some edge in $\Crossafter(b_1c_1) \setminus \Crossbefore(b_2c_2)$, as otherwise $\kappaafter = \crossafter(b_1c_1) < \crossbefore(b_2c_2) \leq \kappabefore$.
    The edges in $\Crossafter(b_1c_1)$ that do not cross $b_2c_2$ before the switch have one endpoint between $b_1$ and $b_2$ and the other endpoint to the left of both $c_1$ and $c_2$, see \cref{fig:untangling-interleaving_properties_calE-3}.
    Let $uv$ be the edge in $\Crossafter(b_1c_1)$ with $u$ between $b_1$ and $b_2$ and $u$ closest to $b_1$, see \cref{fig:untangling-interleaving_properties_calE-3}.%

    We claim that $\crossbefore(uv) \geq \crossafter(b_1c_1) = \kappaafter$, that is $\kappabefore \geq \kappaafter$.
    One crossing on $uv$ is due to $b_1c_1$, and some edges crossing $b_1c_1$ after the switch might also cross~$uv$.
    It thus suffices to prove that for each edge $bc \in \Crossafter(b_1c_1) \setminus \set{uv}$ with $bc \notin \Crossbefore(uv)$ there exists a distinct edge~$e_{bc} \in \Crossbefore(uv) \setminus \Crossafter(b_1c_1)$.
    Consider such an edge~$bc$ and let $a$ and $b$ denote the leaves and $c$ the center of the $2$-star containing~$bc$.
    Due to the choice of $u$, the leaf $b$ is to the right of $u$ and the center $c$ is between $v$ and $c_1$ (after the switch), see \cref{fig:untangling-interleaving-3}.
    Now recall that $\abs{\mathcal{B}}\geq 1$, so there is some 2-star $S$ with center between $c_1$ and $c_2$ and leaves between $a_2$ and $b_1$.
    Recall that there are no disjoint tangled pairs.
    From this we conclude that $a$ is to the left of $b_1$ (as otherwise $abc$ and $S$ are disjoint and tangled).
    See \cref{fig:untangling-interleaving-4}.
    Hence, $ac \in \Crossbefore(uv) \setminus \Crossafter(b_1c_1)$.
    So, for each edge~$bc \in \Crossafter(b_1c_1) \setminus \set{uv}$ there is an edge~$ac \in \Crossbefore(uv) \setminus \Crossafter(b_1c_1)$.
    Moreover, all these edges $ac$ are distinct as they belong to different stars.
    This shows that $uv$  has at least $\crossafter(b_1c_1)$ crossings.
    Altogether, after switching $c_1$ and $c_2$ the drawing is still $k$-planar.
\end{proof}

\begin{figure}
    \centering
    \begin{subfigure}[t]{0.25\textwidth}
        \centering
        \includegraphics[page=2]{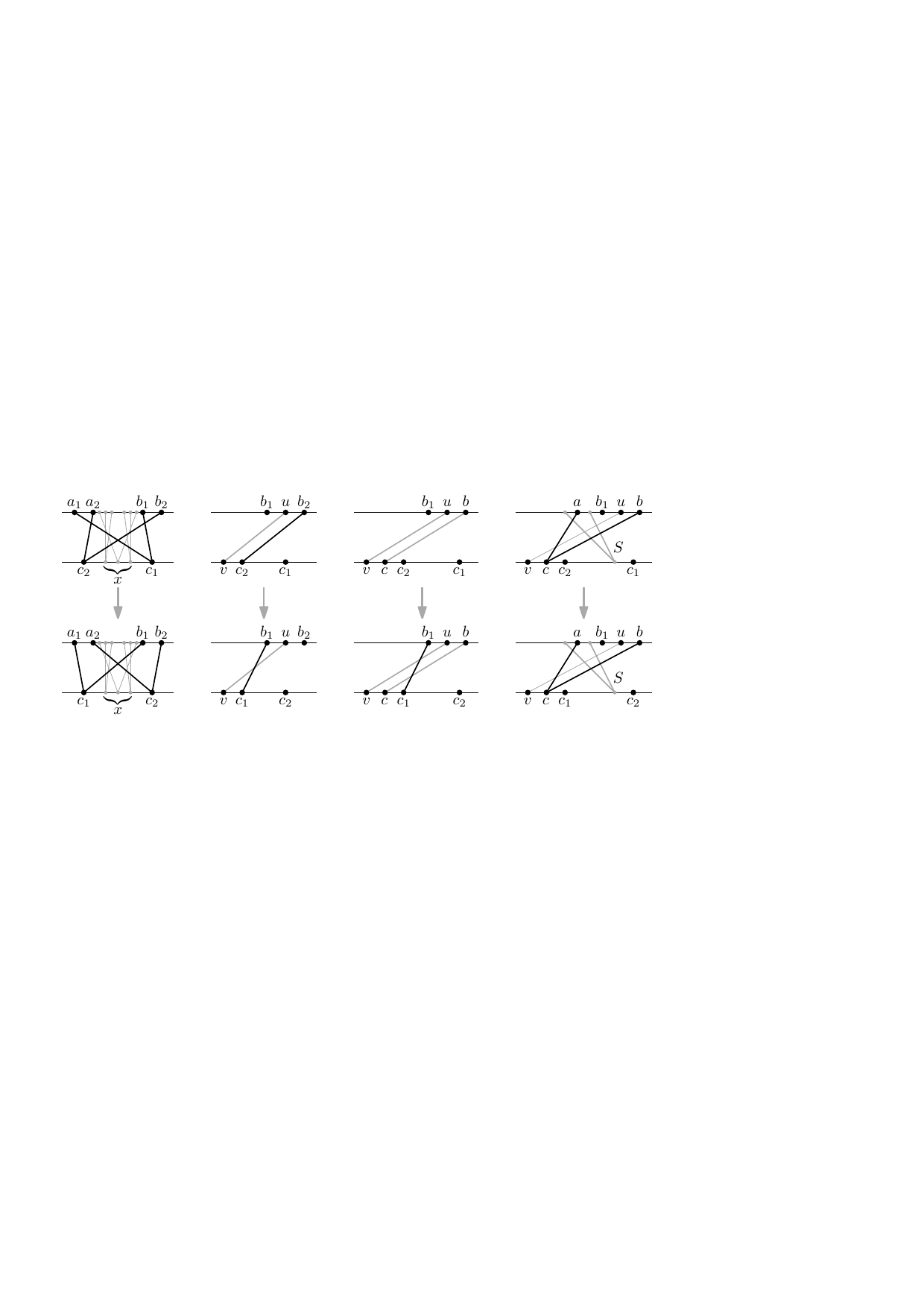}
        \subcaption{}
        \label{fig:untangling-interleaving-1}
    \end{subfigure}\hfill
    \begin{subfigure}[t]{0.25\textwidth}
        \centering
        \includegraphics[page=3]{untangling-interleaving.pdf}
        \subcaption{}
        \label{fig:untangling-interleaving-2}
    \end{subfigure}\hfill
    \begin{subfigure}[t]{0.25\textwidth}
        \centering
        \includegraphics[page=4]{untangling-interleaving.pdf}
        \subcaption{}
        \label{fig:untangling-interleaving-3}
    \end{subfigure}\hfill
    \begin{subfigure}[t]{0.25\textwidth}
        \centering
        \includegraphics[page=5]{untangling-interleaving.pdf}
        \subcaption{}
        \label{fig:untangling-interleaving-4}
    \end{subfigure}\hfill
    \caption{(\subref{fig:untangling-interleaving-1}) Switching centers of a closest, interleaving, tangled pair changes the number of crossings by $2\abs{\mathcal{B}}$ . (\subref{fig:untangling-interleaving-2}) If $b_2c_2$ has less crossings before the switch than $b_1c_1$ has after the switch, then there is an edge $uv$ crossed as often as $b_1c_1$ after the switch. (\subref{fig:untangling-interleaving-3}) If there is an edge~$bc \in \Crossafter(b_1c_1) \setminus \set{uv}$, then $u <_X b$ and $c$ is between $v$ and~$c_1$ after the switch. (\subref{fig:untangling-interleaving-4}) The 2-star containing $bc$ has its other leaf $a$ left of $b_1$, otherwise it is disjoint and tangled with some 2-star $S \in \mathcal{B}$ with center between $c_1$ and $c_2$ (the dark gray area) and leaves between $a_2$ and $b_1$ (the light gray area).}
    \label{fig:untangling-interleaving}
\end{figure}

\starAlgoCorrectness*
\label{lem:2starAlgoCorrectness*}

\begin{proof}
    We prove the lemma by induction on $i$.
    The statement is clear for $i=1$.
    So consider~$i>1$ and assume that it is true for all $j<i$.
    If there is no linear order of $c_1,\ldots,c_i$ that yields a $k$-planar 2-layer drawing (including the precounts), then there is nothing to prove.
    So, for the rest of the proof, assume that $<'_{Y_i}$ is a linear order of $c_1,\ldots,c_i$ such that the 2-layer drawing $(<_X, <'_{Y_i})$ is $k$-planar (including precounts). By \cref{lem:well-ordering}, we can assume that $<'_{Y_i}$ is untangled. We further assume that $<'_{Y_i}$ is such that $c_i$ is rightmost among all such untangled orderings of $c_1,\ldots,c_i$.

    Let~$\ell'$ denote the number of vertices left of $c_i$ in $<'_{Y_i}$.
    We need to show that the algorithm computes an untangled linear order $<_{Y_i}$ of $c_1,\ldots,c_i$ such that the 2-layer drawing $(<_X, <_{Y_i})$ is $k$-planar (including precounts) and such that at least $\ell'$ vertices are placed to the left of~$c_i$ in $<_{Y_i}$.
    By induction, we know that the algorithm computes an untangled linear order $<_{Y_{i-1}}$ of $c_1,\ldots,c_{i-1}$ such that $(<_X, <_{Y_{i-1}})$ is $k$-planar (due to the existence of $<'_{Y_i}$).
    So we only need to show that there is a valid position for inserting $c_i$ into $<_{Y_{i-1}}$ with~$\ell'$ vertices to its left.

    For each $1 \le j \le n$ let $S_j$ be the $2$-star with center~$c_j$ and two leaves~$a_j,b_j$ with $a_j <_X b_j$.
    Further let $P$ denote the set of centers of all 2-stars nested below $S_i$ (blue in Figure~\ref{fig:2StarAlgo-CenterPlacement}), and let $Q$ denote the set of all centers of 2-stars $S_j$ with $j<i$ that are disjoint from $S_i$ (green in Figure~\ref{fig:2StarAlgo-CenterPlacement}) or interleaving with $S_i$ (purple in Figure~\ref{fig:2StarAlgo-CenterPlacement}). %
    As $a_n <_X \dots <_X a_1$ due to the lexicographic labeling of the centers, the $2$-star~$S_i$ is not nested below any star whose center has already been been placed in $<_{Y_{i-1}}$.
    Hence we have $P\cup Q=\{c_1\ldots,c_{i-1}\}$.

    \begin{figure}
        \centering
        \includegraphics{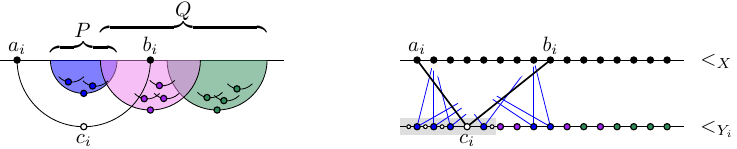}
        \caption{The sets $P$ and $Q$ relative to $S_i = a_i c_i b_i$ (left): blue vertices belong to 2-stars nested below $S_i$, purple vertices belong to 2-stars interleaving with $S_i$, green vertices belong to 2-stars disjoint from $S_i$. Possible locations for $c_i$ in $<_{Y_{i-1}}$ are given by the white dots in the gray region  (right): To keep the ordering untangled, $c_i$ has to be left of all green and purple vertices.}
        \label{fig:2StarAlgo-CenterPlacement}
    \end{figure}

    Recall that the algorithm computes $<_{Y_i}$ by finding the rightmost \emph{valid} position in $<_{Y_{i-1}}$, that is, a position where $c_i$ can be placed while keeping the drawing $k$-planar and untangled (if such a position exists).
    Note that placing $c_i$ does not affect the crossing count of any edge which has already been placed, since $S_i$ is not nested below any star~$S_j$ with $j < i$, and since tangled pairs are avoided.
    Hence, for the edges~$e \in E(S_j)$ with $j < i$, all crossings of~$e$ with $S_i$ are accounted for in the precount~$\precount(e)$.
    So the only restrictions for the algorithm to place $c_i$ are the untangled ordering and the crossing count on the edges~$c_ia_i$ and~$c_ib_i$ in~$S_i$.

    Preserving the untangled ordering means that $c_i$ must be placed to the left of all centers from $Q$.
    For each position of~$c_i$ there is some number~$x_\ell$ (respectively~$x_r$) of centers in~$P$ to the left (right) of~$c_i$.
    Each center of~$P$ to the left (right) of~$c_i$ accounts for two additional crossings of~$c_ia_i$ (respectively $c_ib_i$).
    That is, a position is valid if and only if it is to the left of all centers from $Q$, $2x_\ell + \precount(c_ia_i) \leq k$, and $2x_r + \precount(c_ib_i) \leq k$.
    Hence, the valid positions for $c_i$ form an interval (gray region in Figure~\ref{fig:2StarAlgo-CenterPlacement}).
    The algorithm computes the order~$<_{Y_i}$ by placing $c_i$ at the rightmost position of this interval (if it is not empty).
    So it remains to prove that the interval is not empty and at the rightmost position there are $\ell'$ vertices to the left of $c_i$.

    We shall prove that in $<_{Y_{i-1}}$ there are at least $\ell'$ vertices from $P$ to the left of the leftmost vertex from $Q$.
    If $Q=\emptyset$, the claim clearly holds (as $\lvert P\rvert = i-1 \geq \ell'$ in this case).
    So assume that $Q\neq \emptyset$ and let $c_q$ be the leftmost vertex from~$Q$ in the order $<_{Y_{i-1}}$.
    Let $P'$ denote the set of the $\ell'$ centers from~$P$ that are to the left of $c_i$ in $<'_{Y_i}$.
    As each $c_j\in P'$ is in~$P$ and as $c_q\in Q$, we have either $j<q$ and $S_j$ is nested below $S_q$, or $j>q$ and $S_j$ is disjoint from or interleaving with $S_q$.
    All centers of such disjoint or interleaving stars belong to $P'_1=\{c_j\in P'\colon j>q\}$, all centers of such nested stars to $P'_2=\{c_j\in P'\colon j<q\}=P'\setminus P'_1$.

    The algorithm places the centers in $P'_2$ before~$c_q$ is processed (by definition of $P'_2$) and we have $c_x <'_{Y_i} c_i <'_{Y_i} c_q$ for all $c_x \in P'_2$.
    So by induction (the statement of this lemma applied to $c_q$ and the order $<'_{Y_i}$ restricted to $c_1,\ldots,c_q$), we conclude that at the time the algorithm processed~$c_q$ and produced the order $<_{Y_q}$, there were at least $\lvert P'_2\rvert$ vertices to the left of $c_q$.
    Then the algorithm continues by extending $<_{Y_q}$ (if $q<i-1$).
    In this procedure, each center from $P'_1$ is placed to the left of $c_q$ due the untangled ordering.
    Indeed, $c_x <'_{Y_i} c_i <'_{Y_i} c_q$ for all $c_x \in P'_1$ and all stars with centers in~$P'_1$ are disjoint from or interleaving with~$S_q$.
    That is, $c_x <_{Y_{i-1}} c_q$ for all $c_x \in P'_1$ as both~$<'_{Y_i}$ and~$<_{Y_{i-1}}$ are untangled.

    All vertices of~$P_2'$ and~$P_1'$ are to the left of~$c_q$ in~$<_{Y_{i-1}}$.
    Hence, there are indeed at least $\lvert P'_1\rvert+\lvert P'_2\rvert=\lvert P'\rvert = \ell'$ vertices to the left of $c_q$ in $<_{Y_{i-1}}$ (and they are all from $P$).
    Because~$(<_X, <'_{Y_i})$ is $k$-planar with $\ell'$ centers from $P$ to left of $c_i$ and $\lvert P\rvert-\ell'$ to the right of $c_i$, we have $\precount(a_ic_i) + 2\ell' \leq k$ and $\precount( b_ic_i) + 2(\lvert X\rvert -\ell') \leq k$.
    Hence, the position in $<_{Y_{i-1}}$ with exactly $\ell'$ vertices to its left is valid for $c_i$.
    Therefore, the algorithm computes an order~$<_{Y_i}$ where $c_i$ is placed with $\ell'$ vertices to its left.
    This concludes the proof.
\end{proof}

\section{Omitted Proof from \cref{sec:tight}}

\lowerBoundProp*
\label{prop:lower-bound*}

\begin{proof}
    For $k \ge 2$, let $G_k$ be the bipartite graph with bipartition
    $(X_k,Y_k)$ of $V(G)$, $X_k=\{x_1,\dots,x_{3k+4}\}$ and
    $Y_k=\{u,v,w\}$; see \cref{fig:tightness}.
    In~$G_k$, vertex~$u$ is adjacent to vertices
    $x_1,\dots,x_{k+1},x_{k+3},\dots,x_{2k+2}$, so~$\median(u)=x_{k+1}$.
    Similarly, $w$ is adjacent to vertices
    $x_{k+4},\dots,x_{2k+3},x_{2k+5}\dots,x_{3k+4}$,
    so~$\median(w)=x_{2k+3}$ (for $k \ge 2$, $\deg(w) \ge 4$, so the
    median is rounded down by~\Algorithm).  Finally, vertex~$v$ is
    adjacent to vertices~$x_{k+2}$, $x_{2k+4}$, and~$x_{2k+5}$, so
    $\median(v)=x_{2k+4}$.  Applying the median heuristic~\Algorithm
    to~$G_k$ yields the order $<_{\Algorithm}=\langle u,w,v \rangle$
    for~$Y_k$.  In the corresponding straight-line drawing of~$G_k$ (see
    \cref{fig:tightness}, where the position of~$v$ is labeled
    $v_{\Algorithm}$ and the edges incident to~$v$ are blue), the
    edge~$(x_{k+2},v)$ (labeled~$e_\Algorithm$ in \cref{fig:tightness})
    has the maximum number of crossings, namely $3k$ (the last $k$ edges
    incident to~$u$ and all $2k$ edges incident to~$w$).  On the other
    hand, it is easy to check that, in the optimal drawing, which
    corresponds to the order $\optord=\langle u,v,w \rangle$ of~$Y_k$,
    every edge has at most $k$ crossings.  In particular, among the
    edges incident to~$w$, the edge to~$x_{k+4}$ has the largest number
    of crossings with respect to~$\optord$; namely~$k$.  (In
    \cref{fig:tightness}, the position of~$v$ in this order is
    labeled~$v_\star$ and the edges incident to~$v$ are green.  They
    also have exactly $k$ crossings.)  Hence, the local crossing number
    of $(<_k,<_\Algorithm)$ is $3k$ and that of $(<_k,<_\star)$ is $k$.

    In $G_k'=G_k-(x_{2k+5},w)$, all medians are unique.  Any median
    heuristic outputs the same order $<$ for~$Y_k$; the local crossing
    number of $(<_k,<)$ is $3k-1$ and that of $(<_k,<_\star)$ is~$k$.
\end{proof}

\end{document}